\documentclass[copyright,creativecommons]{eptcs}

\usepackage[latin1]{inputenc}
\usepackage{alltt}
\usepackage{xspace}
\usepackage{amsthm}
\usepackage{stmaryrd}
\usepackage{bussproofs}

\usepackage{tikz}
\usetikzlibrary{positioning}

\usepackage{float}
\floatstyle{boxed} 
\restylefloat{figure}

\newtheorem{theorem}{Theorem}
\newtheorem{example}{Example}
\newtheorem{lemma}{Lemma}

\newcommand{\PPC}{\emph{PPC}\xspace}
\newcommand{\EPPC}{\emph{$PPC_{EM}$}\xspace}
\newcommand{\DSPPC}{\emph{$PPC_\bullet$}\xspace}
\newcommand{\fail}{\ensuremath{\bot}\xspace}
\newcommand{\wait}{\texttt{wait}\xspace}
\newcommand{\case}[3][\theta]{[#1]#2\shortrightarrow#3}
\newcommand{\cmatch}[3][\theta]{\{\!\!\!\{#2\,/_{\!#1}\ #3\}\!\!\!\}}
\newcommand{\match}[3][\theta]{\{#2\,/_{\!#1}\ #3\}}
\newcommand{\re}{\longrightarrow}
\newcommand{\rre}{\longleftarrow}
\newcommand{\reppc}{\re_{PPC}}
\newcommand{\reeppc}{\re_{EM}}
\newcommand{\reb}{\re_{\bullet}}
\newcommand{\reB}{\re_{B}}
\newcommand{\rem}{\re_{m}}
\newcommand{\rer}{\re_{r}}
\newcommand{\rep}{\re_{p}}
\newcommand{\Pre}{\Longrightarrow}
\newcommand{\rPre}{\Longleftarrow}
\newcommand{\matching}[3][\theta]{\left\langle#1|#2|#3\right\rangle}
\newcommand{\sem}[3][\theta]{\left\llbracket#1|#2|#3\right\rrbracket}

\newcommand{\set}[1]{\{#1\}}
\newcommand{\fv}[1]{f\!v(#1)}
\newcommand{\fm}[1]{f\!m(#1)}
\newcommand{\fn}[1]{f\!n(#1)}
\newcommand{\trad}[1]{\llbracket#1\rrbracket}

\newcommand{\pure}{\!\!\downarrow\,}
\newcommand{\size}[1]{\ensuremath{\mathcal{S}(#1)}\xspace}
\newcommand{\nestor}{\ensuremath{\prec_\mathcal{N}}\xspace}
\newcommand{\edward}{\ensuremath{\prec_\mathcal{S}}\xspace}

\title{On the Implementation of Dynamic Patterns} \author{Thibaut Balabonski
  \institute{Laboratoire PPS, CNRS and Université Paris Diderot}
  \email{thibaut.balabonski@pps.jussieu.fr} }

\begin{document}
\maketitle

\begin{abstract}
  The evaluation mechanism of pattern matching with dynamic patterns
  is modelled in the {\em Pure Pattern Calculus} by one single meta-rule. This
  contribution presents a refinement which narrows the gap between the
  abstract calculus and its implementation. A calculus is designed to
  allow reasoning on matching algorithms. The new calculus is proved to be
  confluent, and to simulate the original {\em Pure Pattern Calculus}. A family
  of new, matching-driven, reduction strategies is proposed.
\end{abstract}

\section*{Introduction: Dynamic Patterns}

{\bf Pattern matching} is a basic mechanism used to deal with algebraic data
structures in functional programming languages. It allows to define a function
by reasoning on the shape of the arguments.  For instance, define a
binary tree to be either a single data or a node with two subtrees (code on the left,
in ML-like syntax). Then a function on binary trees may be defined by reasoning
on the shapes generated by these two possibilities (code on the right).
\medskip\\
{\small\begin{minipage}{0.49\linewidth}
\begin{alltt}
  type 'a tree =
          | Data 'a
          | Node of 'a tree * 'a tree
\end{alltt}
\end{minipage}
\begin{minipage}{0.49\linewidth}
\begin{alltt}
let f t = match t with
        | Data d           ->  <code1>
        | Node (Data d) r  ->  <code2>
        | Node l r         ->  <code3>
\end{alltt}
\end{minipage}}
\medskip\\
An argument given to the function {\tt f} is first compared to (or {\bf matched}
against) the shape {\tt Data} {\tt d} (called a {\bf pattern}). In case of
success, the occurrences of {\tt d} in {\tt <code1>} are replaced by the
corresponding part of
the argument, and {\tt <code1>} is executed. In case of failure of this first
matching (the argument is not a data) the argument is matched against the second
pattern, and so on until a matching succeeds or there is no pattern left.

One limit of this approach is that patterns are fixed expressions
mentioning explicitly the constructors to which they can apply, which restricts
polymorphism and reusability of the code.  This can be improved by allowing
patterns to be parametrised: one single function can be specialised in various
ways by instantiating the parameters of its patterns by different constructors
or even by functions building patterns. For instance in the following
code, the function {\tt f} would take an additional parameter {\tt\bf p} which
would then be used to define the first two patterns. In this case, instantiating
{\tt\bf p} with the constructor {\tt Data} would yield the same function as
before, but any other function building a pattern can be used for {\tt\bf p}!

\begin{center}
\begin{minipage}{0.5\linewidth}
\begin{alltt}
let f {\bf p} t = match t with
          | {\bf p} d            ->  <code1>
          | Node ({\bf p} d) r   ->  <code2>
          | Node l r     {\bf }  ->  <code3>
\end{alltt}
\end{minipage}
\end{center}

However, introducing parameters and functions inside patterns deeply modifies
their nature: they become dynamic objects that have to be evaluated. This
disrupts the matching algorithms and introduces new evaluation behaviours. This
paper intends to give tools to study these extended evaluation possibilities.

The {\em Pure Pattern Calculus} (\PPC) of
B. Jay and D. Kesner~\cite{PPC,PatternCalculus} models the behaviour of dynamic
patterns by using a
meta-level notion of pattern matching. The present contribution analyses the
content of the meta pattern matching of \PPC (reviewed in Section~\ref{PPC}),
and proposes an explicit pattern matching calculus (Section~\ref{EPPC}) which is
confluent, which simulates \PPC, and which allows the description of new
reduction strategies (Section~\ref{Applications}). An extension of the explicit
calculus is then discussed (Section~\ref{EPMPS}) before a conclusion is drawn.

\section{The Pure Pattern Calculus}
\label{PPC}

This section only reviews some key aspects of \PPC. Please refer to~\cite{PPC}
for a complete story with more examples.  The syntax of \PPC is close to
the one of $\lambda$-calculus. The main difference is the replacement of the
abstraction over a variable $\lambda x.b$ by an abstraction over a pattern (with
a list of matching variables) written $\case{p}{b}$. There is also a new
distinction between {\bf variable} occurrences $x$ and {\bf matchable}
occurrences $\hat{x}$ of a name $x$. Variable occurrences are usual variables
which may be substituted while matchable occurrences are immutable and used as
matching variables or constructors.

\begin{center}
\begin{tabular}{r@{$\quad::=\quad$}l@{\hspace{1.6cm}}l}
$t$ & $x\ |\ \hat{x}\ |\ tt\ |\ \case{t}{t}$ & \PPC Terms
\end{tabular}
\end{center}

\noindent
where $\theta$ is a list of names. Letter $a$ (resp. $b$, $p$) is used to
indicate a term in position of {\bf a}rgument (resp. function {\bf b}ody,
{\bf p}attern).

As pictured below, in the abstraction $\case{p}{b}$ the list of names $\theta$
binds matchable occurrences in the pattern $p$ and variable occurrences in the
body $b$.  Substitution of free variables and $\alpha$-conversion are deduced
(see~\cite{PPC} for details on \PPC, or Figures~\ref{FreeNames}
and~\ref{Substitution} for a formal definition in an extended setting).

\[
[\hskip1.5mm
  \begin{tikzpicture}[remember picture,overlay,baseline=(theta.base)]
    \node[inner sep=1pt](theta){$x$};
  \end{tikzpicture}
\hskip1.5mm]\hskip2mm x\hskip2mm
\begin{tikzpicture}[remember picture,overlay,baseline=(hatx.base)]
  \node[inner sep=0pt](hatx){$\hat{x}$};
\end{tikzpicture}
\hskip2mm\shortrightarrow\hskip2mm
\begin{tikzpicture}[remember picture,overlay,baseline=(x.base)]
  \node[inner sep=1pt](x){$x$}; \coordinate[above=1pt of hatx](hatxn);
  \draw[line width=1pt] (theta) ..controls +(0.2,0.25) and
  +(-0.2,0.4).. (hatxn); \draw[line width=1pt] (theta) ..controls +(0.3,0.55)
  and +(-0.2,0.5).. (x.north);
\end{tikzpicture}
\hskip2mm\hat{x} \quad=_\alpha\quad\case[\,y\,]{\ \ x\ \hat{y}\ }{\ y\ \hat{x}}
\]

One of the features of \PPC is the use of a single syntactic application for two
different meanings: the term $t_1t_2$ may represent either the usual {\bf functional
  application} of a function $t_1$ to an argument $t_2$  or the construction of a
data structure by {\bf structural application} of a constructor to one or more
arguments.  The latter is invariant: any structural application is forever a
data structure, whereas the functional application may be evaluated or
instantiated someday (and then turn into anything else, including a structural
application).

The simplest notion of pattern matching is syntactic: an argument $a$
matches a pattern $p$ if and only if there is a substitution $\sigma$ such that
$a=p^\sigma$. However, with arbitrary patterns, this solution generates
non-confluent calculi~\cite{LambdaPatOld}.  To recover confluence, syntactic matching
can be used together with a restriction on patterns, as for instance the
{\em rigid pattern condition} of the lambda-calculus with
patterns~\cite{LambdaPat}. The alternative
solution of \PPC allows a priori any term to be a pattern, and checks the
validity of patterns only a posteriori, when pattern matching is performed. In
particular, the restriction on patterns applies only once the evaluation of the
pattern is completed. This allows a greater freedom of evaluation and a greater
polymorphism of patterns, and hence a greater expressivity.

This is done by a more subtle notion of matching, called {\bf
  compound matching}, which tests whether patterns and arguments are in a
so-called {\bf matchable form}. A matchable form denotes a term which is
understood as a value, or in other words a term whose current form is stable and
then allows matching.
Matchable forms are described in \PPC at the meta-level by
the following grammar:
\begin{center}
\begin{tabular}{r@{$\quad::=\quad$}l@{\hspace{1.6cm}}l}
$d$ & $\hat{x}\ |\ dt$ & \PPC data structures\\
$m$ & $d\ |\ \case{t}{t}$ & \PPC matchable forms
\end{tabular}
\end{center}

\noindent
Compound matching is then defined (still at the meta-level) by the following
equations, taken in order.

\begin{center}
\begin{tabular}{r@{$\quad:=\quad$}ll}
$\cmatch{a}{\hat{x}}$ & $\set{x\mapsto a}$ & if $x\in\theta$\\
$\cmatch{\hat{x}}{\hat{x}}$ & $\set{}$ & if $x\not\in\theta$\\
$\cmatch{a_1a_2}{p_1p_2}$ & $\cmatch{a_1}{p_1}\uplus\cmatch{a_2}{p_2}$ & if
  $a_1a_2$ and $p_1p_2$ are matchable forms\\
$\cmatch{a}{p}$ & \fail & if $a$ and $p$ are matchable forms, otherwise\\
$\cmatch{a}{p}$ & \wait & otherwise
\end{tabular}
\end{center}

Its result, called a {\bf match} and denoted by $\rho$, may be a substitution
(written $\sigma$), a matching failure (written \fail) or the special value
\wait. The latter case represents undefined cases of matching, when the pattern
or the argument has still to be evaluated or instantiated before being matched.

Decomposition of compound patterns in the equations above is associated with an
operation $\uplus$ of disjoint union which ensures linearity of patterns: no
matching variable should be used twice in the same pattern, or confluence would
be broken~\cite{Klop}. Its formal definition is:
\begin{itemize}
\item $\uplus$ is commutative.
\item $\fail\uplus\rho=\fail$ for any $\rho$ (even \wait).
\item $\wait\uplus\rho=\wait$ for $\rho\neq\fail$.
\item $\sigma_1\uplus\sigma_2=\fail$ if domains of $\sigma_1$ and $\sigma_2$ overlap.
\item $\sigma_1\uplus\sigma_2$ is the union of $\sigma_1$ and $\sigma_2$ otherwise.
\end{itemize}

Finally, \PPC has to deal with a problem related to the dynamics of patterns: a
matching variable may be erased from a pattern during its evaluation. In this
case, no part of the argument would be bound to this matching variable and then no
term would be substituted to the corresponding variable. Hence free variables
would not be preserved, which would make reduction ill-defined (see
Example~\ref{MatchingExample}). This is avoided
in \PPC by a last (meta-level) test, called {\em check}: the result
$\match{a}{p}$ of the matching of $a$ against $p$ is defined as follows.

\begin{itemize}
\item if $\cmatch{a}{p}=\fail$ then $\match{a}{p}=\fail$.
\item if $\cmatch{a}{p}=\sigma$ with $dom(\sigma)\neq\theta$ then
  $\match{a}{p}=\fail$.
\item if $\cmatch{a}{p}=\sigma$ with $dom(\sigma)=\theta$ then
  $\match{a}{p}=\sigma$.
\end{itemize}

\noindent
Remark that $\match{a}{p}$ is not defined if $\cmatch{a}{p}=\wait$.\\
Finally, the reduction $\reppc$ of \PPC is defined by a unique
reduction rule (applied in any context):

\[
(\case{p}{b})a\quad\re_{\beta_m}\quad b^{\match{a}{p}}
\]

\noindent
where for any $b$ and $\sigma$ the expression $b^\sigma$ denotes the application of
the substitution $\sigma$ to the term $b$, and $b^\fail$ denotes some fixed
closed normal term $\fail$.

\begin{example}\label{MatchingExample}
Let $t$ be a \PPC term. The redex $(\case[x]{\hat{c}\hat{x}}{x})\ (\hat{c}t)$
reduces to $t$: the constructor $\hat{c}$ matches itself and the matchable
$\hat{x}$ is associated to $t$. On the other hand,
$(\case[x,y]{\hat{c}\hat{x}}{xy})\ (\hat{c}t)$ reduces to $\fail$: whereas the
compound matching is defined and successful, the check fails since there is no
match for $y$ and the result would be $ty$ where $y$ appears as a free
variable. The redex $(\case[x]{\hat{c}\hat{x}}{x})\ (\hat{c})$ also reduces to
$\fail$ since a constructor will never match a structural application. And last,
$(\case[x]{y\hat{x}}{x})\ (\hat{c}t)$ is not a redex since the pattern
$y\hat{x}$ has to be instantiated.
\end{example}

\section{Explicit Matching}
\label{EPPC}

This section defines the {\em Pure Pattern Calculus with Explicit
  Matching} (\EPPC), a calculus which gives an account of all the steps of a
pattern matching process of \PPC. The first point discussed is the
identification of structural application (Section~\ref{EDS}). An explicit
calculus is then fully detailed (Section~\ref{EPM}) and some of its basic
properties are proved (Section~\ref{Properties}). Explicit formulations of
simpler pattern calculi already appear in~\cite{Cut,Operators,Rho}.

\subsection{Explicit Data Structures}
\label{EDS}

Firstly, a new syntactic construct is introduced to discriminate between
functional and structural applications (as in~\cite{MatchingINets} for the
rewriting calculus for instance). Any application is supposed functional {\em a
  priori}, and two reduction rules propagate structural
information. The explicit structural application of $t$ to $u$ is written
$t\bullet u$.

\begin{center}
\begin{tabular}{r@{$\quad::=\quad$}l@{\hspace{1.6cm}}l}
$t$ & $x\ |\ \hat{x}\ |\ tt\ |\ t\bullet t\ |\ \case{t}{t}$ & \DSPPC terms\\
$d$ & $\hat{x}\ |\ t\bullet t$ & \DSPPC data structures
\end{tabular}
\end{center}

\begin{center}
\begin{tabular}{r@{$\quad\reb\quad$}l}
$\hat{x}\ t$ & $\hat{x}\bullet t$\\
$(t_1\bullet t_2)\ t_3$ & $(t_1\bullet t_2)\bullet t_3$ 
\end{tabular}
\end{center}

The identity morphism embeds \PPC into \DSPPC. The subset of \DSPPC defined by
\PPC is referred to as the set of pure terms. On the other hand, a ``forgetful''
morphism maps \DSPPC terms back to \PPC terms (or pure terms):
\begin{center}
\begin{tabular}{r@{$\quad:=\quad$}l}
$\trad{x}$ & $x$\\
$\trad{\hat{x}}$ & $\hat{x}$\\
$\trad{t_1t_2}$ & $\trad{t_1}\trad{t_2}$\\
$\trad{t_1\bullet t_2}$ & $\trad{t_1}\trad{t_2}$\\
$\trad{\case{p}{b}}$ & $\case{\trad{p}}{\trad{b}}$
\end{tabular}
\end{center}

Some \DSPPC data structures are not mapped to data structures of \PPC, for instance
$(\case{p}{b})\bullet a$. However, for any pure term $t$, if $t\reb^* t'$ and
$t'$ is a \DSPPC data structure, then $t$ is a \PPC data structure (proof by
induction on $t$).
One can also observe that for every \PPC data structure $t$, there exists a
reduction $t\reb^*t'$ with $t'$ a \DSPPC data structure.
Call {\bf well-formed} a term $t$ such that $\trad{t}\ \reb^*\ t$.

\subsection{Explicit Pattern Matching}
\label{EPM}

Another new syntactic object has to be introduced to represent an ongoing
matching operation. The basic information contained in such an object are: the
list of matching variables, a partial result recording what has already been
computed, and a representation of what has still to be solved.

This new object is called {\bf matching} and is written $\matching{\mu}{\Delta}$
with $\theta$ a list of names, $\mu$ a {\bf decided match} (that means, \fail or
a substitution), and $\Delta$ the collection of submatchings that have still to
be solved (a multiset of pairs of terms). For now on, we will consider
only decided matches, written $\mu$ (\wait does not exist as such in \EPPC).

The complete new grammar is:
\begin{center}
\begin{tabular}{r@{$\quad::=\quad$}l@{\hspace{1.6cm}}l}
$t$ & $x\ |\ \hat{x}\ |\ tt\ |\ t\bullet t\ |\ \case{t}{t}\ |\ t\matching{\mu}{\Delta}$ & \EPPC terms\\
$d$ & $\hat{x}\ |\ t\bullet t$ & \EPPC data structures\\
$m$ & $d\ |\ \case{t}{t}$ & \EPPC matchable forms
\end{tabular}
\end{center}

\begin{figure}[p]
The set of free names of a term $t$ is $\fn{t}=\fv{t}\cup\fm{t}$.

\[\begin{array}{r@{\quad:=\quad}l}
\multicolumn{2}{l}{\mathrm{Free\ variables}}\\
\fv{x}&\set{x}\\
\fv{\hat{x}}&\emptyset\\
\fv{t_1t_2}&\fv{t_1}\cup\fv{t_2}\\
\fv{t_1\bullet t_2}&\fv{t_1}\cup\fv{t_2}\\
\fv{\case{p}{b}}&\fv{p}\cup(\fv{b}\setminus\theta)\\
\fv{t\matching{\mu}{\Delta}}&(\fv{t}\setminus\theta)\cup\fv{codom(\mu)}\cup\fv{\Delta}\\
\multicolumn{2}{l}{}\\
\multicolumn{2}{l}{\mathrm{Free\ matchables}}\\
\fm{x}&\emptyset\\
\fm{\hat{x}}&\set{x}\\
\fm{t_1t_2}&\fm{t_1}\cup\fm{t_2}\\
\fm{t_1\bullet t_2}&\fm{t_1}\cup\fm{t_2}\\
\fm{\case{p}{b}}&(\fm{p}\setminus\theta)\cup\fm{b}\\
\fm{t\matching{\mu}{\Delta}}&\fm{t}\cup\fm{codom(\mu)}\cup\fm{\pi_1(\Delta)}\cup(\fm{\pi_2(\Delta)}\setminus\theta)
\end{array}\]

\noindent
where if $\Delta=(a_1,p_1)...(a_n,p_n)$ then
$\fm{\pi_1(\Delta)}=\bigcup_i\fm{a_i}$ and
$\fm{\pi_2(\Delta)}=\bigcup_i\fm{p_i}$.

\caption{Free names of a \EPPC term}
\label{FreeNames}
\end{figure}

\begin{figure}[p]
\[\begin{array}{r@{\quad:=\quad}l@{\qquad}l}
x^\sigma & \sigma_x & x\in dom(\sigma)\\
x^\sigma & x & x\not\in dom(\sigma)\\
\hat{x}^\sigma & \hat{x}\\
(tu)^\sigma & t^\sigma u^\sigma\\
(t\bullet u)^\sigma & t^\sigma \bullet u^\sigma\\
(\case{p}{b})^\sigma & (\case{p^\sigma}{b^\sigma}) & \theta\cap(dom(\sigma)\cup\fn{\sigma})=\emptyset\\
(t\matching{\mu}{\Delta})^\sigma & t^\sigma\matching{\mu^\sigma}{\Delta^\sigma} &
  \theta\cap(dom(\sigma)\cup\fn{\sigma})=\emptyset
\end{array}\]

\noindent
where in $\Delta^\sigma$ (resp. $\mu^\sigma$) the substitution propagates in all
terms of $\Delta$ (resp. of the codomain of $\mu$).

\caption{Substitution in \EPPC}
\label{Substitution}
\end{figure}

\begin{figure}[p]

\noindent
\textbf{Initialisation}

\begin{center}
\begin{tabular}{r@{$\quad\reB\quad$}ll}
$(\case{p}{b})a$ & $b\matching{\emptyset}{(a,p)}$
\end{tabular}
\end{center}

\noindent
\textbf{Structural application}

\begin{center}
\begin{tabular}{r@{$\quad\reb\quad$}l}
$\hat{x}\ t$ & $\hat{x}\bullet t$\\
$(t_1\bullet t_2)\ t_3$ & $(t_1\bullet t_2)\bullet t_3$ 
\end{tabular}
\end{center}

\noindent
\textbf{Matching}\\
Since $\Delta$ has been defined as a multiset of pairs of terms, its elements
are not ordered. In the following rules $(a,p)\Delta$ denotes the (multiset)
union of $\Delta$ with the singleton $\set{(a,p)}$.\bigskip

\noindent
The first three matching rules are for successful matching steps.
\begin{center}
\begin{tabular}{r@{$\quad\rem\quad$}ll}
  $b\matching{\mu}{(a,\hat{x})\Delta}$ & $b\matching{\mu\uplus\set{x\mapsto a}}{\Delta}$ & if $x\in\theta$ and $\fn{a}\cap\theta=\emptyset$\\
$b\matching{\mu}{(\hat{x},\hat{x})\Delta}$ & $b\matching{\mu}{\Delta}$ & if
  $x\not\in\theta$\\
  $b\matching{\mu}{(a_1\bullet a_2,p_1\bullet p_2)\Delta}$ &
  $b\matching{\mu}{(a_1,p_1)(a_2,p_2)\Delta}$
\end{tabular}
\end{center}\bigskip

\noindent
The last six matching rules are for failure, and could be summed up as ``for any
other matchable forms $a$ and $p$, let $b\matching{\mu}{(a,p)\Delta}$ reduce to
$b\matching{\fail}{\Delta}$''.
\begin{center}
\begin{tabular}{r@{$\quad\rem\quad$}ll}
$b\matching{\mu}{(\hat{y},\hat{x})\Delta}$ & $b\matching{\fail}{\Delta}$ & if $x\not\in\theta$ and $x\neq y$\\
$b\matching{\mu}{(a_1\bullet a_2,\hat{x})\Delta}$ & $b\matching{\fail}{\Delta}$ & if $x\not\in\theta$\\
$b\matching{\mu}{(\case[\theta_a]{p_a}{b_a},\hat{x})\Delta}$ & $b\matching{\fail}{\Delta}$ & if $x\not\in\theta$\\
$b\matching{\mu}{(\hat{x},p_1\bullet p_2)\Delta}$ & $b\matching{\fail}{\Delta}$ \\
$b\matching{\mu}{(\case[\theta_a]{p_a}{b_a},p_1\bullet p_2)\Delta}$ & $b\matching{\fail}{\Delta}$ \\
$b\matching{\mu}{(a,\case[\theta_p]{p_p}{b_p})\Delta}$ & $b\matching{\fail}{\Delta}$
\end{tabular}
\end{center}\medskip

\noindent
\textbf{Resolution}

\begin{center}
\begin{tabular}{r@{$\quad\rer\quad$}lll}
$b\matching[\theta]{\sigma}{\emptyset}$ & $b^\sigma$ & if $dom(\sigma)=\theta$ &
  ({\bf substitution rule})\\
$b\matching[\theta]{\sigma}{\emptyset}$ & \fail & if $dom(\sigma)\neq\theta$\\
$b\matching[\theta]{\fail}{\Delta}$ & \fail
\end{tabular}
\end{center}

\caption{Rules of \EPPC}
\label{EPPCRules}
\end{figure}

A pure term of \EPPC is a term without any structural application or matching
(that means a \PPC term). As in \PPC, the symbol \fail used as a term denotes a
fixed closed pure normal term.

Free variables and matchables are defined in Figure~\ref{FreeNames} as a natural
extension of \PPC mechanisms to explicit matching.
Similarly, a notion of (meta-level) substitution is deduced from this definition
(Figure~\ref{Substitution}).
Finally, a notion of $\alpha$-conversion is associated, and from now, on it is
supposed that all bound names in a term are different, and disjoint from free
names.

New rules for matching are of three kinds: an {\em initialisation rule} $\reB$
which triggers a new matching operation, several {\em matching rules} $\rem$
corresponding to all possible elementary matching steps and three {\em resolution
  rules} $\rer$ that apply the result of a completed matching. The complete set
of rules of \EPPC is given in Figure~\ref{EPPCRules}.

Reduction $\reeppc$ of \EPPC is defined by application of any rule of $\reB$,
$\reb$, $\rem$ or $\rer$ in any context.  The subsystem
$\rep\ =\ \reb\cup\rem\cup\rer$ computes (when possible) already existing
pattern matchings but does not create new ones. 

\subsection{Confluence and Simulation properties}
\label{Properties}

This section states and proves four theorems on basic properties of \EPPC and
its links with \PPC. The first one is a
result on the normalization of already existing pattern matchings.

\begin{theorem}
$\rep$ is confluent and strongly normalizing.
\end{theorem}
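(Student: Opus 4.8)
The plan is to prove the two properties separately, then combine them. Strong normalization I would establish by exhibiting a measure on \EPPC terms that strictly decreases along every $\rep$-step. The subsystem $\rep = \reb\cup\rem\cup\rer$ never creates new matchings (only $\reB$ does), so no new work appears; each step consumes something. For the $\reb$-rules the natural decreasing quantity is the number of ordinary (non-bullet) applications whose left argument is a matchable or a bullet-application — I would argue this is well-founded because every $\reb$-step removes exactly one such application and creates none (a bullet-application's left side is again $\hat x$ or a bullet, so no new $\reb$-redex of the relevant shape is produced above it either). For the $\rem$-rules I would use the total size $\sum_{(a,p)\in\Delta}(\size{a}+\size{p})$ of the pending multiset summed over all matchings in the term: the first bullet-decomposition rule replaces $(a_1\bullet a_2,\ p_1\bullet p_2)$ by $(a_1,p_1)(a_2,p_2)$, which strictly decreases this size (two $\bullet$-nodes disappear), and all other $\rem$-rules simply delete a pair from $\Delta$. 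For the $\rer$-rules the decreasing quantity is the number of matching constructs $\matching{\mu}{\Delta}$ in the term, since the substitution rule and the two failure rules each erase one matching and create none; one must check that substituting $\sigma$ into $b$ does not introduce a matching — it does not, because $\sigma$ ranges over matches, whose codomains are ordinary terms, and the propagation rules for substitution in Figure~\ref{Substitution} do not manufacture $\matching{\ }{\ }$ nodes. Assembling these into a single lexicographic measure $(\#\text{matchings},\ \#\text{relevant applications} + \Sigma\text{-size of pending multisets})$ — or, more carefully, a triple — gives a well-founded order that strictly decreases on every $\rep$-step, hence strong normalization.

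For confluence, since we already have strong normalization, by Newman's lemma it suffices to prove local confluence: whenever $t\rep t_1$ and $t\rep t_2$, the two results can be joined. I would proceed by case analysis on the relative positions of the two contracted redexes. The disjoint case and the nested case (one redex strictly inside the other) are routine, since all rules act on syntactically local patterns and substitution commutes appropriately with the rules — here the only mild care is when one redex lies inside the body $b$, inside $\mu$'s codomain, or inside a component of $\Delta$ of a matching that the other step modifies; the usual substitution lemmas (substitution commutes with $\rem$, $\reb$, $\rer$) and the fact that $\reB$/$\rer$ only touch whole matchings handle these. The genuinely interesting case is critical pairs: two rules overlapping at the same position. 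Because $\Delta$ is a multiset, two different $\rem$-steps may pick two different pairs $(a,p)$ and $(a',p')$ from the same matching; these obviously commute, since each acts on its own pair and merely updates $\mu$ by a $\uplus$ with a singleton (or sets it to $\fail$), and $\uplus$ is commutative and associative, with $\fail$ absorbing — so the two orders of accumulation yield the same $\mu$. One must also check the overlap of a $\rem$-step producing $\mu=\fail$ with the resolution rule $b\matching{\fail}{\Delta}\rer\fail$: if one branch takes the failure-resolution step first it already reaches $\fail$; if the other branch does a $\rem$-step first, it still has a $\matching{\fail}{\Delta'}$ which resolves to $\fail$ — so they join at $\fail$. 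Similarly $\rem$ on the last pair versus $\rer$ on the emptied multiset must be checked: doing the $\rem$-step first empties $\Delta$ with an updated $\mu$, and then the appropriate $\rer$-rule fires and yields the same outcome (either $b^\sigma$ or $\fail$, according to whether the completed $\mu$ has domain $\theta$) — here I would invoke that $\mu\uplus\{x\mapsto a\}$ has the domain one expects, so the domain test $dom(\sigma)=\theta$ gives the same verdict regardless of the order in which the $\rem$-steps were performed.

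The main obstacle I anticipate is not any single case but getting the strong-normalization measure exactly right so that it is insensitive to the duplication caused by the substitution rule $b\matching{\sigma}{\emptyset}\rer b^\sigma$: applying $\sigma$ can copy subterms of $b$, and if those copies contain $\rep$-redexes the naive counts of $\reb$- or $\rem$-redexes can increase. The fix is to make the matching-count the most significant component of the lexicographic measure — the substitution rule strictly decreases that count (one matching removed, none created, since codomains of matches are matching-free), so any increase in the lower-order components is dominated. A symmetric subtlety is that the bullet-decomposition $\rem$-rule replaces one pair by two; again this is dominated because it strictly decreases the $\Sigma$-size component while leaving the matching-count fixed. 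Once the measure is pinned down, both halves of the theorem follow, and confluence is immediate from Newman's lemma.
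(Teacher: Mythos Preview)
Your local-confluence argument is essentially the paper's: check critical pairs and invoke Newman's lemma, with the non-linear case $\matching{\mu}{(a_1,\hat x)(a_2,\hat x)\Delta}$ being the only subtle one, closed by $\fail$. That part is fine.

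The strong-normalization argument, however, has a genuine gap. You assert that the substitution rule $b\matching{\sigma}{\emptyset}\rer b^\sigma$ strictly decreases the number of matching constructs because ``codomains of matches are matching-free''. This is false in \EPPC. The first matching rule
\[
b\matching{\mu}{(a,\hat x)\Delta}\ \rem\ b\matching{\mu\uplus\{x\mapsto a\}}{\Delta}
\]
places an \emph{arbitrary} \EPPC term $a$ into the codomain of $\mu$, and $a$ may itself contain matchings. When the substitution rule later fires, those matchings are copied into every occurrence of $x$ in $b$, so the total count of matching constructs can go up, not down. Concretely, take
\[
t\ =\ \hat c\,\matching[\emptyset]{\emptyset}{(x,\hat c)(x,\hat c)}\ \matching[x]{\,x\mapsto y\matching[y]{\emptyset}{(\hat c,\hat y)}\,}{\emptyset}.
\]
Here $t$ contains three matchings. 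Firing the substitution rule on the outer matching substitutes $y\matching[y]{\emptyset}{(\hat c,\hat y)}$ for both occurrences of $x$ in the inner matching, yielding a term that still contains three matchings (one removed, one duplicated). Your first lexicographic component does not decrease, and since the lower components blow up under duplication, the whole measure fails.

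The paper fixes this with a more delicate invariant: instead of counting matchings, it measures the \emph{nesting depth} of matchings via ``potentially nested chains'' (sequences of matchings where each is either syntactically inside the previous one or reachable through a variable it binds). The substitution rule strictly shortens every maximal such chain, because the outermost matching of the chain disappears while the copies it produces sit at strictly smaller depth; the multiset of chain lengths therefore decreases in the multiset order. A straightforward size measure then handles all non-substitution $\rep$-steps as the second lexicographic component. You need some device of this kind---one that is robust under duplication---to make the termination argument go through.
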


\begin{proof}\

\begin{itemize}
\item We define two well-founded orders \nestor and \edward, whose lexicographic
  product contains ${}_p\!\!\rre$. This will enforce strong normalization.
\begin{itemize}
\item \nestor sorts terms with respect to the nesting of matchings. It is
  based on an over-approximation of the depth of potentially nested matchings
  (matchings that are syntactically nested or that may become such after some
  substitutions).
  For any lists of names $\theta_i$, decided matches $\mu_i$, and lists of pairs of
  terms $\Delta_i$, the sequence
  $\matching[\theta_1]{\mu_1}{\Delta_1};...;\matching[\theta_n]{\mu_n}{\Delta_n}$
  is called a potentially nested chain of length $n$ if for each
  $i\in\set{1...n-1}$ one of these conditions holds:
  \begin{itemize}
  \item {\bf Nesting:} $\matching[\theta_{i+1}]{\mu_{i+1}}{\Delta_{i+1}}$
    appears in $\Delta_i$ or in the codomain of $\mu_i$.
  \item {\bf Potential nesting:} a variable of $\theta_{i+1}$ appears in
    $\Delta_i$ or in the codomain of $\mu_i$.
  \end{itemize}
  The set of maximal chains of a term $t$ is the set of all
  potentially nested chains that can be built using the matchings appearing in
  $t$ and that can not be extended (neither by the left nor by the right) using
  other matchings of $t$. For this extraction, remember that all bound names in
  $t$ are supposed to be different, and disjoint from free names. The depth of
  $t$ is the multiset of the lengths of the maximal chains of $t$.
  \begin{example}
    Write
    $t=\hat{c}\,\matching[\rule{0pt}{10pt}\emptyset\,]{\,\emptyset\,}{\,(x,\hat{c})(x,\hat{c})}\,\matching[\rule{0pt}{10pt}\,x\,]{\,x\mapsto
        y\matching[y]{\emptyset}{(\hat{c},\hat{y})}\,}{\,\emptyset\,}$.
    The term $t$ contains three matchings and has one maximal chain of length
    $3$, which is
    \[\matching[\rule{0pt}{10pt}\,\emptyset\,]{\,\emptyset\,}{\,(x,\hat{c})(x,\hat{c})\,}\,;\,\matching[\rule{0pt}{10pt}\,x\,]{\,x\mapsto y\matching[y]{\emptyset}{(\hat{c},\hat{y})}\,}{\,\emptyset\,}\,;\,\matching[\rule{0pt}{10pt}\,y\,]{\,\emptyset\,}{\,(\hat{c},\hat{y})\,}\]
    The reduction
    $t\ \rer\ t'=\hat{c}\,\matching[\rule{0pt}{10pt}\,\emptyset\,]{\,\emptyset\,}{\,(y_1\matching[y_1]{\emptyset}{(\hat{c},\hat{y_1})},\hat{c})\,\,(y_2\matching[y_2]{\emptyset}{(\hat{c},\hat{y_2})},\hat{c})\,}$
    yields a new term $t'$ which still contains three matchings (one was reduced
    and disappeared but another one was duplicated) and admits two maximal
    chains of length $2$, namely
    \[\matching[\rule{0pt}{10pt}\,\emptyset\,]{\,\emptyset\,}{\,(y_1\matching[y_1]{\emptyset}{(\hat{c},\hat{y_1})},\hat{c})\,\,(y_2\matching[y_2]{\emptyset}{(\hat{c},\hat{y_2})},\hat{c})\,}\,;\,\matching[\rule{0pt}{10pt}\,y_1\,]{\,\emptyset\,}{\,(\hat{c},\hat{y_1})\,}\]\[\matching[\rule{0pt}{10pt}\,\emptyset\,]{\,\emptyset\,}{\,(y_1\matching[y_1]{\emptyset}{(\hat{c},\hat{y_1})},\hat{c})\,\,(y_2\matching[y_2]{\emptyset}{(\hat{c},\hat{y_2})},\hat{c})\,}\,;\,\matching[\rule{0pt}{10pt}\,y_2\,]{\,\emptyset\,}{\,(\hat{c},\hat{y_2})\,}\]
  \end{example}
  The usual order on natural integers gives a well-founded order on the lengths
  of potentially nested chains. \nestor is defined as the multiset extension of
  this order, applied to the depths of terms. It strictly decreases for any
  reduction by the substitution rule, and is less or equal for any other
  reduction.
\item \edward is the natural order on the size of terms, defined as follows:
  \[\begin{array}{r@{\quad:=\quad}l}
  \size{x} & 1\\
  \size{\hat{x}} & 1\\
  \size{t_1t_2} & \size{t_1}+\size{t_2}+2\\
  \size{t_1\bullet t_2} & \size{t_1}+\size{t_2}+1\\
  \size{\case{p}{b}} & \size{p}+\size{b}\\
  \size{b\matching{\mu}{\Delta}} & \size{b} + \size{\fail} +
  \sum_{x\in dom(\mu)}\size{\mu_x} + \sum_{(a,p)\in_k\Delta}k(\size{a}+\size{p})
  \end{array}\]
  where we write $e\in_k\Delta$ when the element $e$ appears in the multiset
  $\Delta$ with multiplicity $k$.

  \edward strictly decreases for any reduction except by the substitution rule.
\end{itemize}
\item Matching rules generate some critical pairs, most of which are trivially
  convergent. The most subtle case is the reduction of a non linear matching:

\begin{center}
$\matching{\mu\uplus\set{x\mapsto a_1}}{(a_2,\hat{x})\Delta}
\quad_{p}\reflectbox{$\re$}\quad
\matching{\mu}{(a_1,\hat{x})(a_2,\hat{x})\Delta}
\quad\rep\quad
\matching{\mu\uplus\set{x\mapsto a_2}}{(a_1,\hat{x})\Delta}$
\end{center}

Since $\uplus$ is a disjoint union of substitutions, both sides can be reduced
to $\matching{\fail}{\Delta}$.\\ Finally, $\rep$ is weakly confluent, and then
confluent by Newman's Lemma~\cite{Terese}.
\end{itemize}

\end{proof}

The second theorem states the confluence of $\reeppc$. Since the reduction of
\EPPC is defined by several rules, the result does not fall into the modular
framework of~\cite{PPC}. It is proved here directly by the Tait and
Martin-Löf's technique. The main construction of the proof is the definition (in
Figure~\ref{ParallelReduction}) of a parallel reduction relation $\Pre$ enjoying
the diamond property (Lemma~\ref{Diamond}). The relation $\Pre$ is first linked
to $\reeppc$ in Lemma~\ref{ParallelSimulation}.

\begin{figure}
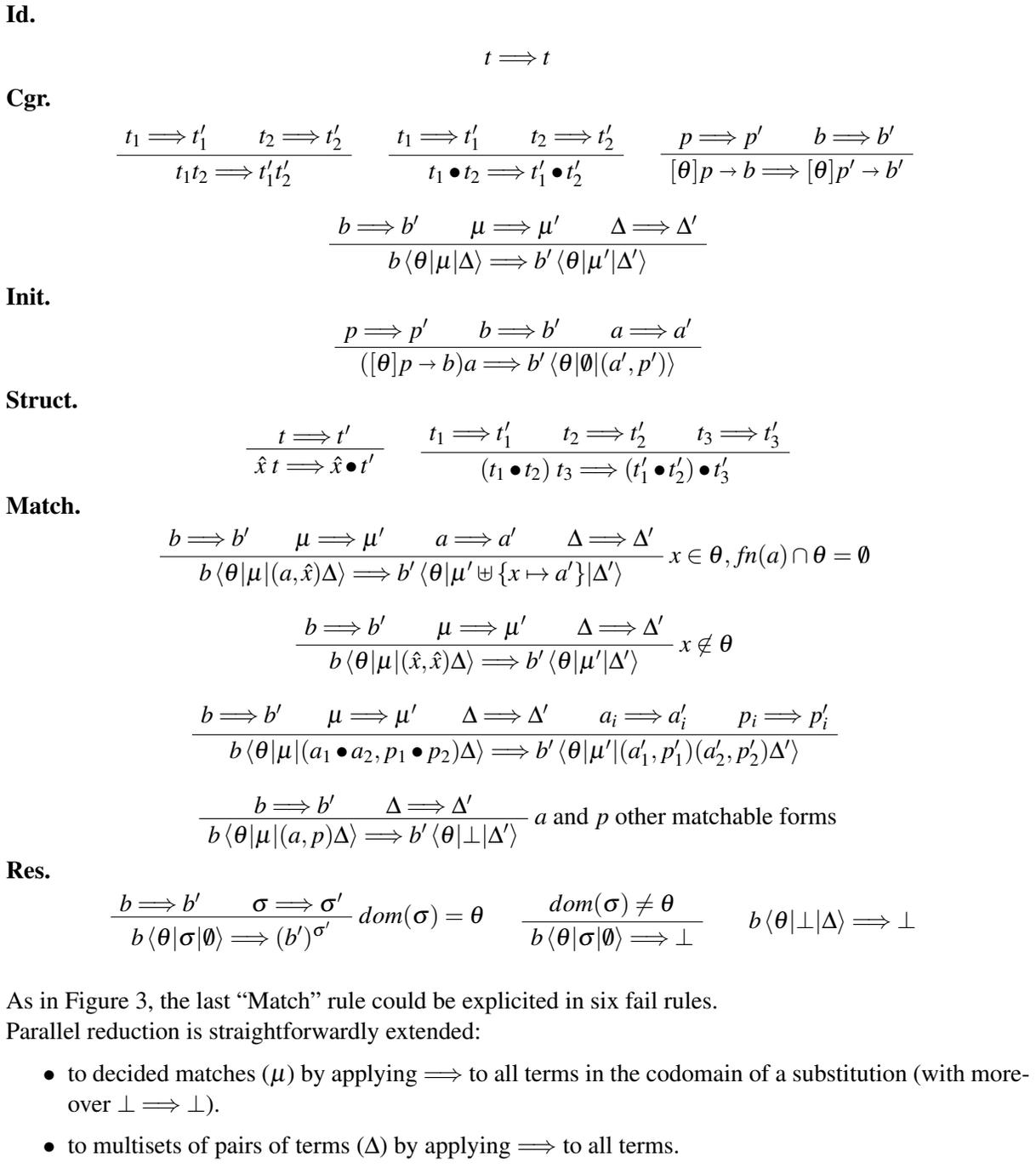


\noindent
\textbf{Id.}
\begin{center}
\AxiomC{$t\Pre t$}
\DisplayProof
\end{center}

\noindent
\textbf{Cgr.}
\begin{center}
\AxiomC{$t_1\Pre t'_1$}
\AxiomC{$t_2\Pre t'_2$}
\BinaryInfC{$t_1t_2\Pre t'_1t'_2$}
\DisplayProof\quad
\AxiomC{$t_1\Pre t'_1$}
\AxiomC{$t_2\Pre t'_2$}
\BinaryInfC{$t_1\bullet t_2\Pre t'_1\bullet t'_2$}
\DisplayProof\quad
\AxiomC{$p\Pre p'$}
\AxiomC{$b\Pre b'$}
\BinaryInfC{$\case{p}{b}\Pre\case{p'}{b'}$}
\DisplayProof\bigskip\\
\AxiomC{$b\Pre b'$}
\AxiomC{$\mu\Pre\mu'$}
\AxiomC{$\Delta\Pre\Delta'$}
\TrinaryInfC{$b\matching{\mu}{\Delta}\Pre b'\matching{\mu'}{\Delta'}$}
\DisplayProof
\end{center}

\noindent
\textbf{Init.}
\begin{center}
\AxiomC{$p\Pre p'$}
\AxiomC{$b\Pre b'$}
\AxiomC{$a\Pre a'$}
\TrinaryInfC{$(\case{p}{b})a\Pre b'\matching{\emptyset}{(a',p')}$}
\DisplayProof
\end{center}

\noindent
\textbf{Struct.}
\begin{center}
\AxiomC{$t\Pre t'$}
\UnaryInfC{$\hat{x}\ t\Pre \hat{x}\bullet t'$}
\DisplayProof\quad
\AxiomC{$t_1\Pre t'_1$}
\AxiomC{$t_2\Pre t'_2$}
\AxiomC{$t_3\Pre t'_3$}
\TrinaryInfC{$(t_1\bullet t_2)\ t_3\Pre (t'_1\bullet t'_2)\bullet t'_3$}
\DisplayProof
\end{center}

\noindent
\textbf{Match.}
\begin{center}
\AxiomC{$b\Pre b'\hspace{7mm}\mu\Pre\mu'$}
\AxiomC{$a\Pre a'$}
\AxiomC{$\Delta\Pre\Delta'$}
\RightLabel{$x\in\theta, \fn{a}\cap\theta=\emptyset$}
\TrinaryInfC{$b\matching{\mu}{(a,\hat{x})\Delta}\Pre
b'\matching{\mu'\uplus\set{x\mapsto a'}}{\Delta'}$}
\DisplayProof\bigskip\\
\AxiomC{$b\Pre b'$}
\AxiomC{$\mu\Pre\mu'$}
\AxiomC{$\Delta\Pre\Delta'$}
\RightLabel{$x\not\in\theta$}
\TrinaryInfC{$b\matching{\mu}{(\hat{x},\hat{x})\Delta}\Pre
  b'\matching{\mu'}{\Delta'}$}
\DisplayProof\bigskip\\
\AxiomC{$b\Pre b'\hspace{7mm}\mu\Pre\mu'\hspace{7mm}\Delta\Pre\Delta'$}
\AxiomC{$a_i\Pre a'_i$}
\AxiomC{$p_i\Pre p'_i$}
\TrinaryInfC{$b\matching{\mu}{(a_1\bullet a_2,p_1\bullet p_2)\Delta}\Pre 
b'\matching{\mu'}{(a'_1,p'_1)(a'_2,p'_2)\Delta'}$}
\DisplayProof\bigskip\\
\AxiomC{$b\Pre b'$}
\AxiomC{$\Delta\Pre\Delta'$}
\RightLabel{$a$ and $p$ other matchable forms}
\BinaryInfC{$b\matching{\mu}{(a,p)\Delta}\Pre b'\matching{\fail}{\Delta'}$}
\DisplayProof
\end{center}

\noindent
\textbf{Res.}
\begin{center}
\AxiomC{$b\Pre b'$}
\AxiomC{$\sigma\Pre\sigma'$}
\RightLabel{$dom(\sigma)=\theta$}
\BinaryInfC{$b\matching{\sigma}{\emptyset}\Pre (b')^{\sigma'}$}
\DisplayProof\quad
\AxiomC{$dom(\sigma)\neq\theta$}
\UnaryInfC{$b\matching{\sigma}{\emptyset}\Pre\fail$}
\DisplayProof\quad
\AxiomC{$b\matching{\fail}{\Delta}\Pre\fail$}
\DisplayProof
\end{center}\bigskip

As in Figure~\ref{EPPCRules}, the last ``Match'' rule could be explicited
in six fail rules.

Parallel reduction is straightforwardly extended:
\begin{itemize}
\item to decided matches ($\mu$) by applying $\Pre$ to all terms in the codomain of a
  substitution (with moreover $\fail\Pre\fail$).
\item to multisets of pairs of terms ($\Delta$) by applying $\Pre$ to all terms.
\end{itemize}

\caption{Definition of parallel reduction relation $\Pre$}
\label{ParallelReduction}
\end{figure}

\begin{lemma}\label{ParallelSimulation}
$\reeppc\quad\subseteq\quad\Pre\quad\subseteq\quad\reeppc^*$
\end{lemma}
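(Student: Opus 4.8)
The statement has two inclusions, proved separately and in different styles.

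For $\reeppc \subseteq \Pre$, the plan is to show that every one-step reduction $t \reeppc t'$ can be derived by $\Pre$. I would argue by induction on the context in which the contracted redex sits. In the base case, the redex is at the root: for each of the four families of root rules ($\reB$, $\reb$, $\rem$, $\rer$) I pick the matching $\Pre$-rule (Init, Struct, one of the Match rules, or one of the Res rules) and feed it reflexive premises $u \Pre u$ (available from the Id.\ rule) for all immediate subterms that are not touched by the rule; for $\rer$'s substitution rule I also use that $\sigma \Pre \sigma$ holds by the extension of $\Pre$ to decided matches. This yields exactly $t \Pre t'$. For the inductive step, the redex lies strictly inside one immediate subterm; by induction that subterm parallel-reduces, every sibling subterm parallel-reduces to itself by Id., and the appropriate Cgr.\ rule (the congruence rules for application, structural application, abstraction, and matching) reassembles these into $t \Pre t'$. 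One must also check the case where the redex is inside $\mu$ or inside some pair of $\Delta$, using the componentwise extension of $\Pre$ together with the last Cgr.\ rule.

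For $\Pre \subseteq \reeppc^*$, the plan is induction on the derivation of $t \Pre t'$. For Id.\ the conclusion is $t \reeppc^* t$ by zero steps. For each Cgr.\ rule, apply the induction hypotheses to the premises to get $\reeppc^*$ reductions on the subterms, then perform them one after another inside the common context; concatenating these reduction sequences gives $t \reeppc^* t'$ since $\reeppc$ is closed under all contexts (including the $\matching{\cdot}{\cdot}$ context, and inside $\mu$ and $\Delta$). For each operational $\Pre$-rule I first reduce the subterms to their primed versions via $\reeppc^*$ (induction hypotheses), landing in a term of the shape on which the corresponding \EPPC rule from Figure~\ref{EPPCRules} fires at the root; then I fire that single rule. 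For instance, Init.\ gives $(\case{p}{b})a \reeppc^* (\case{p'}{b'})a' \reB b'\matching{\emptyset}{(a',p')}$; the non-linear-looking Match rule that appends $x\mapsto a'$ to $\mu'$ is matched by the first $\rem$ rule (its side condition $\fn{a}\cap\theta=\emptyset$ is exactly the one carried by the $\Pre$-rule, and is preserved by reduction of $a$ to $a'$); the fail Match rule is matched by whichever of the six $\rem$ fail rules applies to the matchable forms $a,p$; the Res.\ rules map to the three $\rer$ rules, where for the substitution case one also needs $b\matching{\sigma}{\emptyset} \reeppc^* b'\matching{\sigma'}{\emptyset} \rer (b')^{\sigma'}$ using that substitution reduction propagates into the codomain of $\sigma$.

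The main obstacle is bookkeeping around side conditions and the componentwise extension of $\Pre$ to $\mu$ and $\Delta$: one must check that the side conditions attached to the $\Pre$-rules (such as $x\in\theta$, $x\notin\theta$, $\fn{a}\cap\theta=\emptyset$, $dom(\sigma)=\theta$ or $\neq\theta$, and ``$a$ and $p$ other matchable forms'') are exactly those of the corresponding \EPPC rules and are stable under the preliminary $\reeppc^*$ reductions of the subterms — in particular that reducing $a$ does not change whether $\fn{a}\cap\theta=\emptyset$ (true, since reduction does not introduce free names) and does not turn a non-matchable-form into a matchable form in a way that would invalidate the chosen fail rule. Since the $\Pre$-rules were designed to mirror the \EPPC rules one-for-one, no genuine difficulty arises, but this verification is where all the care goes; the rest is routine induction.
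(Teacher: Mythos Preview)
Your proposal is correct and follows exactly the paper's approach: the paper's proof consists of the two lines ``by induction on the definition of $\reeppc$'' and ``by induction on the definition of $\Pre$'', and what you have written is precisely a detailed unfolding of those two inductions. Your worry about the fail Match rule is unnecessary, since in that $\Pre$-rule $a$ and $p$ carry no premises and hence are not reduced, but this does not affect correctness.
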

\begin{proof}\
\begin{itemize}
\item $\reeppc\quad\subseteq\quad\Pre$ \ by induction on the definition of $\reeppc$.
\item $\Pre\quad\subseteq\quad\reeppc^*$ \ by induction on the definition of $\Pre$.
\end{itemize}
\end{proof}

\begin{lemma}\label{ReductionSubstitution}
If $t\Pre t'$ and $\sigma\Pre\sigma'$ then $t^\sigma\Pre t'^{\sigma'}$.
\end{lemma}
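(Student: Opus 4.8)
The plan is to argue by induction on the derivation of $t\Pre t'$, that is, on the last rule of Figure~\ref{ParallelReduction} used to obtain it, with the extensions of $\Pre$ to decided matches and to multisets of pairs treated componentwise: from $\mu\Pre\mu'$, $\Delta\Pre\Delta'$ and $\sigma\Pre\sigma'$ one gets $\mu^\sigma\Pre\mu'^{\sigma'}$ and $\Delta^\sigma\Pre\Delta'^{\sigma'}$ by applying the main statement to each term in the codomain of $\mu$ (using also $\fail^\sigma=\fail\Pre\fail=\fail^{\sigma'}$) and to each term of $\Delta$. Throughout I use the standing convention that bound names are fresh, in particular disjoint from $dom(\sigma)\cup\fn{\sigma}$ and from $dom(\sigma')\cup\fn{\sigma'}$, so that substitution commutes with the binders of $\case{\cdot}{\cdot}$ and $\matching{\cdot}{\cdot}$ exactly according to the equations of Figure~\ref{Substitution}.

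First I would dispatch the \textbf{Id.} rule, where $t'=t$ and what is needed is $t^\sigma\Pre t^{\sigma'}$, by a preliminary claim proved separately by induction on $t$: for a variable $x\in dom(\sigma)$ it is exactly the hypothesis $\sigma_x\Pre\sigma'_x$ packaged in $\sigma\Pre\sigma'$; for $x\notin dom(\sigma)$ and for a matchable $\hat{x}$ both sides are unchanged, so \textbf{Id.} applies; the application, structural-application, abstraction and matching cases follow from the induction hypotheses on the immediate subterms together with the matching \textbf{Cgr.} rule. With this claim in hand the Id.\ case of the lemma is immediate. For all the other rules the pattern is uniform: push $\sigma$ through the left-hand side using the equations of Figure~\ref{Substitution} (licensed by freshness), apply the induction hypothesis to each premise, and re-apply the same rule of Figure~\ref{ParallelReduction} with $\sigma'$ in place of $\sigma$ to rebuild the right-hand side. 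The \textbf{Cgr.}, \textbf{Init.} and \textbf{Struct.} cases go through this way without incident.

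For the \textbf{Match.} rules the only nontrivial point is the side condition of the first one: from $\fn{a}\cap\theta=\emptyset$ and $\fn{\sigma}\cap\theta=\emptyset$ one obtains $\fn{a^\sigma}\cap\theta=\emptyset$ because $\fn{a^\sigma}\subseteq\fn{a}\cup\fn{\sigma}$, so the rule still fires on $b^\sigma\matching{\mu^\sigma}{(a^\sigma,\hat{x})\Delta^\sigma}$; moreover $x\in\theta$ is untouched by $\sigma'$, hence $(\mu'\uplus\set{x\mapsto a'})^{\sigma'}=\mu'^{\sigma'}\uplus\set{x\mapsto a'^{\sigma'}}$, which is exactly the shape required of the result. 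The failure (``other matchable forms'') rule is stable under substitution because substitution preserves the outermost constructor of $\hat{x}$, of $t_1\bullet t_2$ and of $\case{p}{b}$, so a pair of matchable forms that is not one of the three ``successful'' shapes stays so after substitution.

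The genuinely delicate case, and the one I expect to be the main obstacle, is the substitution \textbf{Res.} rule; write its match as $\sigma_0$ to avoid clashing with the ambient $\sigma$. From $b\Pre b'$ and $\sigma_0\Pre\sigma_0'$ with $dom(\sigma_0)=\theta$ we have $b\matching{\sigma_0}{\emptyset}\Pre (b')^{\sigma_0'}$, and applying $\sigma$ gives $b^\sigma\matching{\sigma_0^\sigma}{\emptyset}$ with $dom(\sigma_0^\sigma)=dom(\sigma_0)=\theta$; the induction hypotheses yield $b^\sigma\Pre b'^{\sigma'}$ and $\sigma_0^\sigma\Pre\sigma_0'^{\sigma'}$, so the same Res rule produces $b^\sigma\matching{\sigma_0^\sigma}{\emptyset}\Pre (b'^{\sigma'})^{\sigma_0'^{\sigma'}}$. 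It then remains to identify $(b'^{\sigma'})^{\sigma_0'^{\sigma'}}$ with $((b')^{\sigma_0'})^{\sigma'}$, which is the standard substitution-composition identity and holds because $dom(\sigma_0')=\theta$ is disjoint from $dom(\sigma')\cup\fn{\sigma'}$ (freshness): checking it on a variable splits into the cases $x\in\theta$, $x\in dom(\sigma')$, and $x$ free, each of which reduces to the disjointness of $\theta$ from $dom(\sigma')\cup\fn{\sigma'}$. The two failure Res rules are immediate: $dom(\sigma_0^\sigma)=dom(\sigma_0)\neq\theta$, respectively $(b\matching{\fail}{\Delta})^\sigma=b^\sigma\matching{\fail}{\Delta^\sigma}$, and in both cases the target $\fail$ is closed so $\fail^{\sigma'}=\fail$. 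Everything outside the Res case is routine bookkeeping once the freshness convention is invoked; the composition identity is the one step worth spelling out.
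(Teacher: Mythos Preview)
Your proof is correct and follows exactly the approach the paper indicates: induction on the derivation of $t\Pre t'$. The paper's own proof is a single line (``By induction on the derivation of $t\Pre t'$''), so what you have written is a fully fleshed-out version of that sketch, including the auxiliary claim for the \textbf{Id.} case and the substitution-composition identity needed for the \textbf{Res.} case, both of which are standard and handled correctly.
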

\begin{proof}
By induction on the derivation of $t\Pre t'$.
\end{proof}

\begin{lemma}\label{Diamond}
$\rPre\Pre\quad\subseteq\quad\Pre\rPre$
\end{lemma}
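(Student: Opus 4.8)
The plan is to establish the diamond property directly, by induction on the structure of $t$ (equivalently, on $\size{t}$), with a case analysis on the last rule used in each of the two derivations $t\Pre t_1$ and $t\Pre t_2$. A Takahashi-style argument through a ``complete development'' $t^\diamond$ satisfying $t\Pre t_i$ implies $t_i\Pre t^\diamond$ is not available off the shelf here, because fully resolving a single matching takes two $\Pre$-steps — one \textbf{Match} step to feed a submatching into $\mu$, then one \textbf{Res} step to perform the substitution — so a term carrying an unresolved matching has in general no single $\Pre$-reduct dominating all the others. The tools are the induction hypothesis on the (strictly smaller) subterms of $t$, which are exactly the premises appearing in the congruence-style rules, and Lemma~\ref{ReductionSubstitution} for the cases involving the substitution rule. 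One also uses freely the routine invariants that $\Pre$ creates no free names (so conditions $\fn{a}\cap\theta=\emptyset$ survive reduction of $a$), that $\Pre$ preserves the fail configurations of a pair, and that $\Pre$ on the codomain of a match does not alter its domain (so the $dom(\sigma)=\theta$ tests remain stable).

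If either derivation is an instance of \textbf{Id}, take $s$ to be the other reduct. If both end with the same congruence rule — two \textbf{Cgr} of the same shape, two \textbf{Init}, or two \textbf{Struct} — join the corresponding subterms by the induction hypothesis and reassemble (for \textbf{Init}, \textbf{Struct} and the matching \textbf{Cgr} this also means joining the body, the pattern, the argument, the match pointwise on its codomain, and the pairs of $\Delta$, as appropriate). The mixed congruence cases arise only when the head of $t$ admits two distinct top rules: $\hat{x}\,u$ (\textbf{Cgr} vs.\ \textbf{Struct}), $(v\bullet w)\,u$ (\textbf{Cgr} vs.\ \textbf{Struct}), and $(\case{p}{b})\,a$ (\textbf{Cgr} vs.\ \textbf{Init}). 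In each, the \textbf{Cgr}-reduct still has the redex at its root and catches up in one $\Pre$-step by firing the missed \textbf{Struct}/\textbf{Init} rule — legitimate precisely because those rules bundle parallel reduction of all their arguments into the conclusion — while the other reduct closes by an ordinary congruence step.

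The substantive case is $t=b\matching{\mu}{\Delta}$, whose top rule may be the matching \textbf{Cgr}, a \textbf{Match} rule, or a \textbf{Res} rule. \textbf{Cgr} vs.\ \textbf{Cgr} is the induction hypothesis. For \textbf{Cgr} vs.\ \textbf{Match}, the \textbf{Cgr}-reduct still contains a reduct of the pair consumed on the \textbf{Match} side and catches up by re-firing that \textbf{Match} rule, the other side closing by \textbf{Cgr}. For \textbf{Match} vs.\ \textbf{Match} \emph{on the same pair}, one joins $b$, $\mu$, that pair and the remaining pairs by the induction hypothesis. The delicate case is \textbf{Match} vs.\ \textbf{Match} \emph{on different pairs} $(a,\hat{x})$ and $(a',\hat{y})$: then $t_1$ still carries a reduct of $(a',\hat{y})$ and $t_2$ a reduct of $(a,\hat{x})$, so each side resolves the pair the other consumed in one further $\Pre$-step; when $x\neq y$ the two matches thereby produced coincide and the diagram closes at the matching whose match extends the common one by $\{x\mapsto s_a\}\uplus\{y\mapsto s_{a'}\}$, while when $x=y$ — the non-linear pattern — the double insertion makes $\uplus$ collapse to $\fail$ and both sides reach the matching with match $\fail$ over the joined remainder of $\Delta$. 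This last subcase is exactly the non-linear critical pair already treated in the proof of the first theorem, now lifted to $\Pre$. The other fail situations — a fail \textbf{Match} rule on one side, or $\mu=\fail$ — all converge via the $\fail$-absorbing laws of $\uplus$ and the third \textbf{Res} rule, the common term being $\fail$ when $\mu=\fail$ and a matching with match $\fail$ (one \textbf{Res} step from $\fail$) otherwise. Finally, for the \textbf{Res} cases with $\Delta=\emptyset$: if $dom(\sigma)\neq\theta$ both reducts yield $\fail$; if $dom(\sigma)=\theta$, then \textbf{Res} vs.\ \textbf{Res} yields $b_1^{\sigma_1}$ and $b_2^{\sigma_2}$, joined by $s_b^{s_\sigma}$ via the induction hypothesis and Lemma~\ref{ReductionSubstitution}, whereas \textbf{Res} vs.\ \textbf{Cgr} closes by letting the \textbf{Cgr}-reduct $b_2\matching{\sigma_2}{\emptyset}$ take one \textbf{Res} step to $s_b^{s_\sigma}$ and applying Lemma~\ref{ReductionSubstitution} on the \textbf{Res} side.

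The main obstacle is not any individual diagram but the disciplined handling of the \textbf{Match}/\textbf{Res} sub-cases: tracking which pair of the multiset $\Delta$ each side consumed, the algebra of $\uplus$ and $\fail$ (so that every combination with $\fail$ is seen to converge to $\fail$), and the invariants on free names and on $dom(\mu)$ that keep the side conditions replayable. Once this is organised, every diagram closes with a joining term reachable from each of $t_1$ and $t_2$ in at most one further $\Pre$-step, which is exactly $\rPre\Pre\subseteq\Pre\rPre$.
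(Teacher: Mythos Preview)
Your proof is correct and follows essentially the same route as the paper: induction with a case analysis on the last rules of the two derivations, closing the substitution diagrams via Lemma~\ref{ReductionSubstitution}. Your treatment is in fact more explicit than the paper's, which dispatches the two-\textbf{Match} case with a single ``Idem'' while you spell out the same-pair/different-pair split and the non-linear collapse to $\fail$; your side remark on why a Takahashi-style complete development is unavailable (a single $\Pre$-step consumes at most one pair of $\Delta$) is also to the point.
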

\begin{proof}
Suppose $t_1\rPre t\Pre t_2$. Induction on
the derivations of $t\Pre t_1$ and $t\Pre t_2$:
\begin{itemize}
\item If one of the reductions is by ``Id'', the conclusion is immediate.
\item If one reduction is by a ``Cgr'' rule, and the other by a ``Cgr'', ``Init'',
  ``Struct'', or ``Match'' rule, then the induction hypothesis applies
  straightforwardly.
\item If one reduction is by a ``Cgr'' rule and the other by a ``Res'' rule,
  there is one non trivial case: suppose $t_1\matching{\sigma_1}{\emptyset}\rPre
  t\matching{\sigma}{\emptyset}\Pre t_2^{\sigma_2}$. By induction hypothesis
  there are $t_3$ and $\sigma_3$ such that
  $t_1\Pre t_3\rPre t_2$ and $\sigma_1\Pre\sigma_3\rPre\sigma_2$. Then we can
  derive $t_1\matching{\sigma_1}{\emptyset}\Pre t_3^{\sigma_3}$. Finally, by
  Lemma~\ref{ReductionSubstitution} we conclude that $t_2^{\sigma_2}\Pre
  t_3^{\sigma_3}$.
\item If both reductions are by a ``Init'' rule, then the induction hypotheses
  apply straightforwardly.
\item Idem for two ``Struct'' or two ``Match'' rules.
\item Case where both reductions are by a ``Res'' rule. Reductions to $\fail$ are
  straightforward. Then consider the following case: $t_1^{\sigma_1}\rPre
  t\matching{\sigma}{\emptyset}\Pre t_2^{\sigma_2}$. By induction hypotheses
  $t_1\Pre t_3\rPre t_2$ and $\sigma_1\Pre\sigma_3\rPre\sigma_2$. By
  Lemma~\ref{ReductionSubstitution} $t_1^{\sigma_1}\Pre t_3^{\sigma_3}\rPre t_2^{\sigma_2}$.
\end{itemize}
\end{proof}

\begin{theorem}
\EPPC is confluent.
\end{theorem}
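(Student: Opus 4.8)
The plan is to run the standard Tait and Martin-L\"of argument on top of the two lemmas just proved. The abstract skeleton is: (i) a relation with the diamond property has a confluent reflexive--transitive closure; and (ii) if $R\subseteq S\subseteq R^*$ then $R^*=S^*$, so that confluence of $S$ and confluence of $R$ coincide. Here $R=\reeppc$ and $S=\Pre$: Lemma~\ref{ParallelSimulation} gives exactly the sandwich $\reeppc\subseteq\Pre\subseteq\reeppc^*$, and Lemma~\ref{Diamond} gives the diamond property of $\Pre$. So almost all the work is already done, and what remains is the purely diagrammatic part.

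\textbf{Step 1: from the one-step diamond to confluence of $\Pre^*$.} First I would prove the ``strip'' statement $\rPre\Pre^*\subseteq\Pre^*\rPre$ by induction on the length $n$ of the $\Pre^*$-part: if $t_1\rPre t\Pre^* u$ with $n=0$ then $t=u$ and $t_1$ itself works; if $t\Pre t'\Pre^* u$ with the tail of length $n-1$, apply Lemma~\ref{Diamond} to $t_1\rPre t\Pre t'$ to get $s$ with $t_1\Pre s\rPre t'$, then apply the induction hypothesis to $s\rPre t'\Pre^* u$. Iterating this strip lemma by a second induction, on the length of the $\rPre^*$-part, yields $\rPre^*\Pre^*\subseteq\Pre^*\rPre^*$, i.e.\ $\Pre^*$ is confluent.

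\textbf{Step 2: transfer to $\reeppc$.} Taking reflexive--transitive closures in Lemma~\ref{ParallelSimulation}: $\reeppc\subseteq\Pre$ gives $\reeppc^*\subseteq\Pre^*$, while $\Pre\subseteq\reeppc^*$ gives $\Pre^*\subseteq(\reeppc^*)^*=\reeppc^*$. Hence $\Pre^*=\reeppc^*$, and the confluence of $\Pre^*$ established in Step~1 is literally the confluence of $\reeppc$.

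\textbf{Main obstacle.} Honestly, there is no real obstacle left at this point: the genuine combinatorial content --- the case analysis over all pairs of parallel-reduction rules, including the non-linear matching case and the $\uplus$ bookkeeping, plus the substitution lemma --- has already been discharged in Lemmas~\ref{ReductionSubstitution} and~\ref{Diamond}. The only things requiring a little care are the two nested inductions of Step~1 (which are entirely routine) and checking that the constant $\fail$, being $\Pre$-normal except via ``Id'', closes trivially in any diagram --- which it does. So the proof is essentially a two-line invocation of the Tait--Martin-L\"of method.
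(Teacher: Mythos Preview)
Your proposal is correct and follows essentially the same approach as the paper: the paper also invokes Lemma~\ref{Diamond} for the diamond property of $\Pre$, notes (citing \cite{Terese}) that this lifts to $\Pre^*$, and then uses Lemma~\ref{ParallelSimulation} to identify $\Pre^*$ with $\reeppc^*$. Your Step~1 merely spells out the strip-lemma argument that the paper delegates to a reference.
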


\begin{proof}
Since $\Pre$ has the diamond property (Lemma~\ref{Diamond}), its transitive
closure $\Pre^*$ also enjoys the diamond property (\cite{Terese}).
Moreover Lemma~\ref{ParallelSimulation} implies $\reeppc^*\ =\ \Pre^*$, and then
$\reeppc^*$ enjoys the diamond property. Finally, $\reeppc$ is confluent.
\end{proof}

\noindent
The last two theorems establish a link between the calculus with explicit
matching \EPPC and the original implicit \PPC.

\begin{lemma}\label{MatchSimulation}
If $\cmatch{a}{p}=\mu$ with $\mu$ a decided match, then for any $\mu_0$ and
$\Delta$ there are $\mu'$ with $\trad{\mu'}=\mu$ and a reduction
\[\matching{\mu_0}{(a,p)\Delta}\ (\reb\cup\rem)^*\ \matching{\mu_0\uplus\mu'}{\Delta}\]
\end{lemma}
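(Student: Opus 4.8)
The plan is to argue by induction on the structure of the compound-matching computation $\cmatch{a}{p}=\mu$, equivalently by structural induction on $p$, since the only recursive clause of $\cmatch{\cdot}{\cdot}$ recurses on proper subterms. As $\mu$ is a decided match, the \wait-clause is excluded, so $\mu$ comes from one of the four remaining clauses: $\cmatch{a}{\hat{x}}=\set{x\mapsto a}$ with $x\in\theta$; $\cmatch{\hat{x}}{\hat{x}}=\set{}$ with $x\notin\theta$; $\cmatch{a_1a_2}{p_1p_2}=\cmatch{a_1}{p_1}\uplus\cmatch{a_2}{p_2}$ when both sides are matchable forms; and the catch-all $\cmatch{a}{p}=\fail$ when $a,p$ are matchable forms but none of the first three applies.

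The two base clauses are each handled by a single $\rem$-step: clause~1 by the first success rule, giving $\mu'=\set{x\mapsto a}$ — its side condition $\fn{a}\cap\theta=\emptyset$ holds because bound names are disjoint from free ones, a property that persists since $\fn{\cdot}$ is unchanged by $\reb$ — and clause~2 by the second success rule, giving $\mu'=\set{}$. For the failure clause and the recursive clause I first $\reb$-normalise: by the remark of Section~\ref{EDS} every \PPC data structure $\reb$-reduces to a \DSPPC data structure, so $\matching{\mu_0}{(a,p)\Delta}\,\reb^*\,\matching{\mu_0}{(a^\circ,p^\circ)\Delta}$ with $a^\circ,p^\circ$ explicit matchable forms and, since $\trad{\cdot}$ is invariant under $\reb$, $\trad{a^\circ}=a$ and $\trad{p^\circ}=p$. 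In the failure clause the shapes of $a^\circ,p^\circ$ now fit exactly one of the six failure rules — whose side conditions are the negations of the conditions of clauses~1--3 — and I take $\mu'=\fail$. In the recursive clause, $(a_1a_2,p_1p_2)$ becomes $(a_1^\circ\bullet a_2,p_1^\circ\bullet p_2)$, the decomposition success rule splits it into $(a_1^\circ,p_1^\circ)(a_2,p_2)\Delta$, and since $a_1,p_1$ are themselves matchable forms $\cmatch{a_1}{p_1}$ is again a decided match, so the induction hypothesis applies to $(a_1^\circ,p_1^\circ)$; applying it — and the induction hypothesis again to $(a_2,p_2)$ when $\cmatch{a_2}{p_2}$ is also decided — yields $\mu'=\mu_1'\uplus\mu_2'$ with $\trad{\mu_1'\uplus\mu_2'}=\trad{\mu_1'}\uplus\trad{\mu_2'}=\mu$ (commutation of $\trad{\cdot}$ with $\uplus$).

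The delicate sub-case, which I expect to be the main obstacle, is the last possibility in the recursive clause: $\cmatch{a_2}{p_2}=\wait$ while $\mu$ is still decided. Inspecting $\uplus$ shows this forces $\cmatch{a_1}{p_1}=\fail$, hence $\mu=\fail$ and $\mu'=\fail$. Now the induction hypothesis is unavailable for $(a_2,p_2)$, and that pair is $\reb\cup\rem$-irreducible — a non-matchable-form component (for instance a bare variable) is also $\reb$-stuck in its slot — so reducing the left pair only reaches $\matching{\mu_0\uplus\fail}{(a_2,p_2)\Delta}=\matching{\fail}{(a_2,p_2)\Delta}$ rather than $\matching{\fail}{\Delta}$. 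The resolution is that once the recorded match is $\fail$ the pending submatchings no longer matter: the conclusion in the $\fail$ case is to be read as reaching $\matching{\fail}{\Delta'}$ for some sub-multiset $\Delta'$ of $\Delta$, which suffices downstream because the very next step in every use of the lemma is the resolution rule $b\matching{\fail}{\Delta}\rer\fail$ that discards the whole multiset. The remaining points — survival of the name side conditions of the $\rem$-rules under $\reb$-normalisation, commutation of $\trad{\cdot}$ with $\uplus$, and the fact that the context $\Delta$ is merely carried along — are routine.
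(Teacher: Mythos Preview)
Your recursive case rests on the claim that ``since $a_1,p_1$ are themselves matchable forms, $\cmatch{a_1}{p_1}$ is again a decided match''. This is false: matchable-form status only rules out \wait at the \emph{top} clause; the recursive calls on the spine arguments can still return \wait. Concretely, with $c\notin\theta$, take $a_1=\hat{c}\,y$ and $p_1=\hat{c}\,\hat{c}$. Both are \PPC data structures, yet $\cmatch{a_1}{p_1}=\cmatch{\hat{c}}{\hat{c}}\uplus\cmatch{y}{\hat{c}}=\set{}\uplus\wait=\wait$. So the situation you single out as the ``delicate sub-case'' for $(a_2,p_2)$ occurs symmetrically for $(a_1,p_1)$, and your induction does not cover it. (A minor slip in the same paragraph: the residual multiset you reach when one side is \wait is a \emph{super}set of $\Delta$, not a sub-multiset.) There is also a typing problem in the step ``the induction hypothesis applies to $(a_1^\circ,p_1^\circ)$'': after your $\reb$-normalisation these terms carry explicit $\bullet$'s and are no longer \PPC terms, so the hypothesis of the lemma---which speaks of $\cmatch{a}{p}$---does not literally apply to them.

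The paper avoids the second issue by a different decomposition: instead of peeling off one application and recursing, it unfolds the \emph{whole} left spine at once, writing $a=a_n\cdots a_0$ and $p=p_m\cdots p_0$ with $a_n,p_m$ matchables, $\reb$-reducing the spine, and then $\rem$-decomposing down to pairs $(a_n\bullet\cdots\bullet a_m,p_m),(a_{m-1},p_{m-1}),\dots,(a_0,p_0)$. The leftmost pair has a matchable as pattern and is handled directly; the remaining pairs consist of genuine \PPC subterms, so the induction hypothesis is well-typed there. If you want to salvage your one-level scheme, you would need to strengthen the statement to speak of well-formed \DSPPC terms $a,p$ with $\cmatch{\trad{a}}{\trad{p}}=\mu$, and to make your weakening of the conclusion in the $\fail$ case explicit and symmetric in the two components.
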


\begin{proof}
Induction on $\cmatch{a}{p}$.

\begin{itemize}
\item $\cmatch{a}{\hat{x}}$ with $x\in\theta$ or $\cmatch{\hat{x}}{\hat{x}}$
  with $x\not\in\theta$: immediate.
\item $\cmatch{aa_0}{pp_0}$ with $aa_0$ and $pp_0$ matchable forms. Hence
  $a=a_n...a_1$ and $p=p_m...p_1$ with $a_n$ and $p_m$ constructors. Then
  $a_n...a_1a_0\reb^*a_n\bullet...\bullet a_1\bullet a_0$ and
  $p_m...p_1p_0\reb^*p_m\bullet...\bullet p_1\bullet p_0$.
  Suppose $n\geq m$, then
  $\cmatch{aa_0}{pp_0}=\cmatch{a_m...a_n}{p_n}\uplus\cmatch{a_{n-1}}{p_{n-1}}\uplus...\uplus\cmatch{a_0}{p_0}$
  and $\matching{\mu_0}{(a_n\bullet...\bullet a_0,p_m\bullet...\bullet
    p_0)\Delta}\rem^*\matching{\mu_0}{(a_m\bullet...\bullet a_n,p_n)(a_{n-1},p_{n-1})...(a_0,p_0)\Delta}$.
  Case on $p_n=\hat{x}$:

  \begin{itemize}
  \item If $x\in\theta$ then the matching reduces to
  $\matching{\mu_0\uplus\set{x\mapsto a_m\bullet...\bullet
      a_n}}{(a_{n-1},p_{n-1})...(a_0,p_0)\Delta}$.
  \item If $x\not\in\theta$ then the matching reduces to
    $\matching{\mu'_0}{(a_{n-1},p_{n-1})...(a_0,p_0)\Delta}$ with $\mu_0'=\mu_0$
    or $\mu_0'=\fail$.
  \end{itemize}

  \noindent
  In any of these two cases, the induction hypothesis gives the conclusion.
  In the case where $m>n$, the same method allows to derive a reduction to \fail.
\item Cases of matching failure: for instance $\cmatch{\hat{x}}{\hat{y}t}$. The
  following reduction gives the conclusion:
  $\matching{\mu_0}{(\hat{x},\hat{y}t)\Delta}\reb\matching{\mu_0}{(\hat{x},\hat{y}\bullet
    t)\Delta}\rem \matching{\fail}{\Delta}$.
\end{itemize}

\end{proof}

\begin{theorem}
For any terms $t$ and $t'$ of \PPC, if \ $t\reppc t'$ \ then \ $t\reeppc^*t'$.
\end{theorem}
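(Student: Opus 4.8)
The statement to prove is a simulation result: every single \PPC reduction step $t \reppc t'$ can be matched by a (possibly empty, actually non-empty) sequence of \EPPC steps $t \reeppc^* t'$. Since $\reppc$ is the contextual closure of the single rule $\re_{\beta_m}$, the natural approach is induction on the context in which the \PPC redex is contracted. The congruence cases (application, structural-like application which does not exist in pure terms so just ordinary application, and abstraction) are immediate because $\reeppc$ is itself closed under arbitrary contexts, so the induction hypothesis lifts directly. The heart of the matter is the base case: a head redex $(\case{p}{b})a \re_{\beta_m} b^{\match{a}{p}}$, where $a$, $p$, $b$ are pure \PPC terms.

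\textbf{The base case.} Here I would run the \EPPC machinery step by step. First apply the initialisation rule $\reB$ to get $(\case{p}{b})a \reB b\matching{\emptyset}{(a,p)}$. Now the key observation is that, because $(\case{p}{b})a$ is a \PPC redex, the \PPC-level compound matching $\cmatch{a}{p}$ is defined, i.e.\ it is a decided match $\mu$ (not \wait); this is exactly the hypothesis under which Lemma~\ref{MatchSimulation} applies. Invoking that lemma with $\mu_0 = \emptyset$ and $\Delta = \emptyset$ yields $\mu'$ with $\trad{\mu'} = \mu$ and a reduction $b\matching{\emptyset}{(a,p)}\ (\reb\cup\rem)^*\ b\matching{\emptyset\uplus\mu'}{\emptyset} = b\matching{\mu'}{\emptyset}$. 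Then I finish with a resolution step $\rer$: if $\cmatch{a}{p} = \fail$, the third $\rer$ rule gives $b\matching{\fail}{\emptyset} \rer \fail = b^\fail$; if $\cmatch{a}{p} = \sigma$ with $dom(\sigma) \neq \theta$, the second $\rer$ rule gives $b\matching{\sigma'}{\emptyset} \rer \fail = b^\fail$ (noting $dom(\sigma') = dom(\mu') = dom(\mu) = dom(\sigma) \neq \theta$); and if $dom(\sigma) = \theta$, the substitution rule gives $b\matching{\sigma'}{\emptyset} \rer b^{\sigma'}$. By the definition of the \PPC check, these three cases correspond exactly to $\match{a}{p}$ being $\fail$, $\fail$, and $\sigma$ respectively, so in all cases the \EPPC normal form reached is $b^{\match{a}{p}}$ — using that $b$ is pure and $\trad{\sigma'} = \sigma$, so applying $\sigma'$ to the pure term $b$ gives the same result as applying $\sigma = \match{a}{p}$.

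\textbf{Subtleties.} Two points need care. First, the well-formedness / purity bookkeeping: $b^{\sigma'}$ should literally equal $b^{\match{a}{p}}$ as a \PPC term, which requires that substitution in \EPPC restricted to pure bodies and pure codomains coincides with \PPC substitution — this follows from comparing Figure~\ref{Substitution} with the \PPC definition, the point being that no $\bullet$ or matching construct is created when both $b$ and the codomain of $\sigma'$ are pure. Second, one should check that the side conditions on the $\rem$ rules (notably $\fn{a}\cap\theta = \emptyset$ in the first success rule) are met; but this is precisely guaranteed by the $\alpha$-convention (all bound names distinct and disjoint from free names) together with the structure of Lemma~\ref{MatchSimulation}, which already internalises these constraints. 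The main obstacle is therefore not conceptual but a matter of correctly threading Lemma~\ref{MatchSimulation} and aligning its three output cases with the three-way \PPC check; once that alignment is made explicit, the contextual induction closes the proof.
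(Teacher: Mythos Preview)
Your approach mirrors the paper's exactly: reduce to the head redex via context closure (the paper writes ``there is a context $C[\,]$''), apply $\reB$, invoke Lemma~\ref{MatchSimulation} with $\mu_0=\emptyset$ and $\Delta=\emptyset$, and finish by a case analysis on the three resolution rules. On that structural level your plan and the paper's proof are indistinguishable.

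Where you are in fact \emph{more} careful than the paper is in your ``Subtleties'' paragraph: you correctly observe that Lemma~\ref{MatchSimulation} only yields some $\mu'$ with $\trad{\mu'}=\mu$, not $\mu'=\mu$ itself. The paper silently writes $b\matching{\mu}{\emptyset}$ and $b^\sigma$, collapsing this distinction. Unfortunately your attempted resolution of the point --- ``applying $\sigma'$ to the pure term $b$ gives the same result as applying $\sigma$'' --- does not hold. The $\reb$ steps used inside the proof of Lemma~\ref{MatchSimulation} may inject $\bullet$'s into the codomain of $\sigma'$. Concretely, take $\theta=\{x\}$, $p=\hat{x}\hat{c}$, $a=(\hat{c}\hat{d})\hat{c}$, $b=x$: in \PPC one has $(\case[x]{\hat{x}\hat{c}}{x})((\hat{c}\hat{d})\hat{c})\reppc\hat{c}\hat{d}$, but in \EPPC the decomposition rule requires both sides to be $\bullet$-applications, and the rule $(t_1\bullet t_2)\,t_3\reb(t_1\bullet t_2)\bullet t_3$ forces the \emph{inner} bullet $\hat{c}\bullet\hat{d}$ before the outer one can be introduced. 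The only reachable $\sigma'$ is $\{x\mapsto\hat{c}\bullet\hat{d}\}$, resolution yields $\hat{c}\bullet\hat{d}$, and no \EPPC rule removes a $\bullet$. So $b^{\sigma'}\neq b^{\sigma}$ and the literal target $t'=\hat{c}\hat{d}$ is not reached. This is a genuine gap shared with the paper's proof; what your argument (and the paper's) actually establishes is the slightly weaker statement $t\reeppc^* t''$ for some $t''$ with $\trad{t''}=t'$, which is presumably the intended simulation.
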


\begin{proof}
Suppose $t\reppc t'$. There is a context $C[]$ such that
$t=C[(\case{p}{b})a]\ \reppc\ C[b']=t'$ and $\cmatch{a}{p}=\mu$ with $\mu$ a
decided match.\\
By Lemma~\ref{MatchSimulation}
$(\case{p}{b})a\ \reB\  b\matching{\emptyset}{(a,p)}\ (\reb\cup\rem)^*\ b\matching{\mu}{\emptyset}$.\newpage

\noindent
Case on $\mu$:
\begin{itemize}
\item If $\mu=\fail$ then $b'=\fail$ and
  $b\matching{\fail}{\emptyset}\rer\fail$.
\item Else $\mu=\sigma$ and:
\begin{itemize}
\item If $dom(\sigma)=\theta$ then $b'=b^\sigma$ and
  $b\matching{\sigma}{\emptyset}\ \rer\ b^\sigma$.
\item Else $b'=\fail$ and $b\matching{\fail}{\emptyset}\ \rer\ \fail$.
\end{itemize}
\end{itemize}
\end{proof}

\noindent
The map $\trad{\cdot}$ is naturally extended to any \EPPC term, set of \EPPC
terms and decided match, as well as the notion of well-formedness.
Then, for any $\mu$ and $\Delta$ not containing any explicit matching, define the
semantics of the matching $\matching{\mu}{\Delta}$ by:
\[\sem{\mu}{\Delta}=\trad{\mu}\uplus\left(\biguplus_{(a,p)\in\Delta}\cmatch{\trad{a}}{\trad{p}}\right)\]
Note that the semantics can be \wait.

\begin{lemma}\label{SemanticsStability}
For any well-formed $\mu$, $\mu'$, $\Delta$ and $\Delta'$ which do not contain
any explicit matching,\\ if $\matching{\mu}{\Delta}\rem\matching{\mu'}{\Delta'}$
or $\matching{\mu}{\Delta}\reb\matching{\mu'}{\Delta'}$ then
$\sem{\mu}{\Delta}=\sem{\mu'}{\Delta'}$.
\end{lemma}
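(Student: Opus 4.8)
The plan is to prove this by a case analysis on which reduction rule was applied, in each case checking that the two sides of the claimed equality compute to the same match. Recall that $\sem{\mu}{\Delta}$ unfolds to $\trad{\mu}\uplus\bigl(\biguplus_{(a,p)\in\Delta}\cmatch{\trad{a}}{\trad{p}}\bigr)$, so the effect of a single matching or structural-application step on the semantics is localised: all the submatchings of $\Delta$ other than the one being acted upon are untouched (up to the forgetful map), and the multiset union $\uplus$ is commutative and associative, so it suffices to verify that the active part of the left-hand side equals the active part of the right-hand side, and then close $\uplus$ with the common remainder. I will treat the $\reb$ steps first, then the successful $\rem$ steps, then the six failure $\rem$ steps.

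First I would dispose of the structural-application steps. If $\matching{\mu}{\Delta}\reb\matching{\mu'}{\Delta'}$, then in fact $\mu'=\mu$ and $\Delta'$ is obtained from $\Delta$ by rewriting one term $\hat x\,t$ to $\hat x\bullet t$ or $(t_1\bullet t_2)t_3$ to $(t_1\bullet t_2)\bullet t_3$ inside one of the components of a pair. Since $\trad{\cdot}$ maps both $\hat x\bullet t$ and $\hat x\,t$ to $\trad{\hat x}\trad t$, and likewise for the other rule, we get $\trad{\Delta'}=\trad{\Delta}$ componentwise, hence $\sem{\mu}{\Delta}=\sem{\mu'}{\Delta'}$ immediately. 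For the three successful $\rem$ rules the verification is equally direct: for $(a,\hat x)$ with $x\in\theta$ the left side contributes $\cmatch{\trad a}{\hat x}=\set{x\mapsto\trad a}$, matching the $\set{x\mapsto a'}$ added to $\mu$ (here well-formedness guarantees $a=a'$ up to $\trad{\cdot}$, and the side condition $\fn a\cap\theta=\emptyset$ matches the defining clause of $\cmatch{\cdot}{\cdot}$); for $(\hat x,\hat x)$ with $x\notin\theta$ both sides contribute the empty substitution $\set{}$; and for $(a_1\bullet a_2,p_1\bullet p_2)$ the decomposition clause of compound matching, namely $\cmatch{\trad{a_1}\trad{a_2}}{\trad{p_1}\trad{p_2}}=\cmatch{\trad{a_1}}{\trad{p_1}}\uplus\cmatch{\trad{a_2}}{\trad{p_2}}$ (applicable because the images are matchable forms), is exactly the replacement of one pair by two in $\Delta$.

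For the six failure rules I would observe that in each case the pair $(a,p)$ removed from $\Delta$ is such that $\trad a$ and $\trad p$ are matchable forms falling under the last clause $\cmatch{\trad a}{\trad p}=\fail$ of compound matching — one checks this clause by clause: $(\hat y,\hat x)$ with $x\neq y$, $(d,\hat x)$ with $x\notin\theta$ for $d$ a structural application or an abstraction, $(\hat x,p_1\bullet p_2)$, an abstraction against $p_1\bullet p_2$, and anything against an abstraction pattern. Hence the left-hand side is $\fail\uplus(\text{rest})$, which by the definition of $\uplus$ equals $\fail$; and the right-hand side is $\trad{\fail}\uplus(\text{rest})=\fail$ as well. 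The main subtlety to be careful about — and the only real obstacle — is the interaction with well-formedness and $\alpha$-conversion: I must check that the side conditions of the \EPPC rules ($x\in\theta$ vs.\ $x\notin\theta$, $\fn a\cap\theta=\emptyset$, ``$a$ and $p$ other matchable forms'') line up exactly with the guards on the compound-matching equations after applying $\trad{\cdot}$, and that well-formedness of $\mu,\Delta$ is what licenses replacing $a'$ by $a$, $p'$ by $p$, etc.\ (i.e.\ that the explicit terms have already been reduced enough that $\trad{\cdot}$ does not change their matchable-form status); the hypothesis that $\mu,\Delta,\mu',\Delta'$ contain no explicit matching is what makes $\sem{\cdot}{\cdot}$ well-defined on both endpoints. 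Once those bookkeeping points are settled, every case is a one-line equality, and the lemma follows.
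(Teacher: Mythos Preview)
Your proposal is correct and follows exactly the approach the paper indicates: the paper's own proof is the single line ``Case on the reduction rules,'' and you have spelled out that case analysis faithfully. One small inaccuracy worth flagging: the side condition $\fn{a}\cap\theta=\emptyset$ in the first $\rem$ rule does \emph{not} appear in the defining clause $\cmatch{a}{\hat{x}}:=\set{x\mapsto a}$ of compound matching (it is an \EPPC-level hygiene condition, automatically met in \PPC by $\alpha$-convention), so your remark that it ``matches the defining clause'' is slightly off; this does not affect the argument, since the equality of semantics holds regardless.
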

\begin{proof}
Case on the reduction rules.
\end{proof}

\begin{lemma}[\cite{PPC}]\label{PPCSubstitution}
If $t\reppc t'$, then $t^\sigma\reppc t'^\sigma$.
\end{lemma}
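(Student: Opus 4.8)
The statement is the standard substitution lemma for \PPC: reduction is preserved under (meta-level) substitution of free variables. The plan is to proceed by induction on the derivation of $t\reppc t'$, which amounts to induction on the context $C[]$ in which the unique $\re_{\beta_m}$-redex is fired. The congruence cases — where the redex sits strictly inside an application $t_1t_2$, inside a subterm of $\case{p}{b}$, etc. — are immediate from the induction hypothesis together with the compositional definition of substitution given in Figure~\ref{Substitution}; for the abstraction case one uses the side condition $\theta\cap(dom(\sigma)\cup\fn{\sigma})=\emptyset$, which is available since $\alpha$-conversion lets us assume bound names in $t$ are disjoint from the names occurring in $\sigma$.

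The only real work is the base case, where $t=(\case{p}{b})a$ and $t'=b^{\match{a}{p}}$. Applying $\sigma$ to the left-hand side gives $(\case{p^\sigma}{b^\sigma})a^\sigma$, which reduces in one step to $(b^\sigma)^{\match[\theta]{a^\sigma}{p^\sigma}}$. So it suffices to show $(b^\sigma)^{\match[\theta]{a^\sigma}{p^\sigma}} = (b^{\match[\theta]{a}{p}})^\sigma$, i.e. that matching commutes with substitution in the appropriate sense. The key sublemma is: if $\theta\cap(dom(\sigma)\cup\fn{\sigma})=\emptyset$, then $\cmatch{a^\sigma}{p^\sigma} = \sigma'$ where $\sigma'$ is obtained from $\cmatch{a}{p}=\rho$ by post-composing every term in its codomain with $\sigma$ (and $\fail^\sigma=\fail$, $\wait$ maps to $\wait$) — this is proved by a routine induction on the definition of compound matching, noting that $a^\sigma$ is a matchable form iff $a$ is one (substitution of variables cannot create or destroy the head constructor $\hat x$ of a data structure, nor turn a redex into a matchable form, since the outermost shape is preserved), and that $\uplus$ commutes with this operation on codomains. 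One then checks the \emph{check} phase is compatible: $dom(\cmatch{a^\sigma}{p^\sigma}) = dom(\cmatch{a}{p})$, so $\match[\theta]{a^\sigma}{p^\sigma}=\fail$ exactly when $\match[\theta]{a}{p}=\fail$, in which case both sides equal $\fail$ (closed, hence fixed by $\sigma$); otherwise $\match[\theta]{a^\sigma}{p^\sigma}$ is $\match[\theta]{a}{p}$ with codomain substituted by $\sigma$. Finally the equality $(b^\sigma)^{\tau^\sigma} = (b^\tau)^\sigma$ for $\tau=\match[\theta]{a}{p}$ with $dom(\tau)=\theta$ disjoint from $\sigma$ is the usual substitution-composition identity, established by induction on $b$.

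The main obstacle is the matchable-form preservation claim: one must verify carefully that no case of compound matching changes status when $\sigma$ is applied — in particular that a subterm which was $\wait$ (still to be evaluated or instantiated) does not become decided, and conversely. This holds because the classification into data structure / matchable form / neither depends only on the term's outermost constructor and, recursively, on those of its left spine, none of which is a free variable that $\sigma$ could act on in a way that alters the classification: a variable occurrence $x$ in head position makes the term fall in the "otherwise" ($\wait$) case, and after substitution $x^\sigma$ could in principle be anything, so this is precisely where the preservation could fail. However, this situation does not actually arise for the matchings we care about: in the base case we only invoke the sublemma when $\cmatch{a}{p}$ is already a decided match $\mu$ (the hypothesis $t\reppc t'$ forces $\match[\theta]{a}{p}$ to be defined, hence $\cmatch{a}{p}\neq\wait$), and a completed compound matching never inspected any $\wait$ subterm — so the induction goes through on exactly the subterms whose classification is stable. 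This is the same phenomenon already exploited in Lemma~\ref{MatchSimulation}, and the argument can be lifted from there. The remaining verifications are bookkeeping with the definitions in Figures~\ref{FreeNames} and~\ref{Substitution}.
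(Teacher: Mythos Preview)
The paper does not give its own proof of this lemma; it simply cites \cite{PPC}, where the result is established for \PPC. So there is nothing in the paper to compare against beyond that reference, and your outline is indeed the standard argument.

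Your plan is correct in structure, but there is a real wrinkle in your treatment of the key sublemma. You first assert that $\cmatch{a^\sigma}{p^\sigma}$ equals $\cmatch{a}{p}$ with codomain substituted, ``$\wait$ maps to $\wait$'', and then (correctly) observe that this last clause can fail when a variable sits in head position. Your recovery, however, rests on the claim that ``a completed compound matching never inspected any $\wait$ subterm''. This is false: since $\fail\uplus\wait=\fail$, a decided matching may well contain an undecided branch. For a concrete instance take $\theta=\emptyset$, $a=\hat c\,\hat d\,x$ and $p=\hat c\,\hat e\,y$: then $\cmatch{a}{p}=(\{\}\uplus\fail)\uplus\cmatch{x}{y}=\fail\uplus\wait=\fail$, yet the rightmost sub-matching is $\wait$ and its status may change under $\sigma$.

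The fix is easy and leaves your overall strategy intact. State the sublemma only in the direction you need: \emph{if $\cmatch{a}{p}$ is a decided match $\mu$, then $\cmatch{a^\sigma}{p^\sigma}=\mu^\sigma$}. The induction works because matchable forms are closed under substitution (the head matchable $\hat x$ of a data structure is untouched, abstractions stay abstractions), so whichever of the ordered clauses $1$--$4$ fires for $(a,p)$ also fires for $(a^\sigma,p^\sigma)$. In the recursive clause, if the overall result is $\fail$ then at least one branch is $\fail$; by induction that branch remains $\fail$, and $\fail\uplus\rho=\fail$ absorbs whatever the other (possibly $\wait$) branch becomes. If the overall result is a substitution then both branches are substitutions and the induction is direct. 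This closes the gap; the rest of your plan (the check phase, the substitution-composition identity, and the context cases) goes through as you describe.
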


\noindent
Let $t$ be a \EPPC term, and $t'$ the unique normal form of $t$ by $\rep$. Write
$t\!\!\downarrow$ and call purification of $t$ the term $\trad{t'}$. Note that
the purification may not be a pure term if there is an unsolvable matching in it.

\begin{theorem}
For any well-formed terms $t$ and $t'$ of \EPPC,\\ if $t\reeppc t'$ and
$t\!\!\downarrow$ and $t'\!\!\downarrow$ are pure, then
$t\!\!\downarrow\,=t'\!\!\downarrow$\, or
$t\!\!\downarrow\ \reppc t'\!\!\downarrow$\,.
\end{theorem}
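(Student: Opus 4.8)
The plan is to analyse the single step $t \reeppc t'$ according to which rule it uses and where the redex sits. Write $t = C[r]$ with $r$ the contracted redex and $r'$ its contractum, so $t' = C[r']$. The key observation is that the rules split into two groups: the rules of $\rep = \reb \cup \rem \cup \rer$, which only \emph{compute} already-existing matchings, and the initialisation rule $\reB$, which is the only rule that genuinely creates a new pattern-matching step and is the sole rule whose image under $\trad{\cdot}$ is a $\reppc$-step. So I would first dispose of the case $t \rep t'$, and then treat the case $t \reB t'$.

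If $t \rep t'$, then since $\rep$ is confluent and strongly normalizing (first theorem) and $t'$ is obtained from $t$ by one $\rep$-step, $t$ and $t'$ have the same $\rep$-normal form; hence $t\!\!\downarrow = t'\!\!\downarrow$ by definition of purification, and we are in the first alternative. (This is where well-formedness is used: it is preserved along $\rep$, so the normal forms are genuine \EPPC terms whose $\trad{\cdot}$-image makes sense.) If instead $t \reB t'$, the redex is $(\case{p}{b})a$ and the contractum is $b\matching{\emptyset}{(a,p)}$. Purifying both sides, I must compare the $\rep$-normal forms. The idea is to run $\rep$ on the created matching: by Lemma~\ref{MatchSimulation}, if the meta-matching $\cmatch{\trad{a}}{\trad{p}}$ is a decided match $\mu$, then $b\matching{\emptyset}{(a,p)} \ (\reb\cup\rem)^*\ b\matching{\mu'}{\emptyset}$ for some $\mu'$ with $\trad{\mu'}=\mu$, and then a $\rer$-step finishes to $b^{\mu'}$ or to $\fail$, reproducing exactly the clause of $\match{\cdot}{\cdot}$; pushing $\trad{\cdot}$ through and using that $\rep$-normalisation commutes with the purification of the ambient context gives $t'\!\!\downarrow$ equal to $\trad{C}[b^{\trad{\mu'}}]\!\!\downarrow$, i.e.\ exactly $t\!\!\downarrow$ after the corresponding $\re_{\beta_m}$ step. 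If $\cmatch{\trad{a}}{\trad{p}} = \wait$, then the created matching is not reducible to a decided form, it survives into the $\rep$-normal form, $t'\!\!\downarrow$ is not pure, contradicting the hypothesis — so this subcase does not arise. One has to be a little careful that the rest of $t$, i.e.\ the context $C$, may also contain matchings that reduce; but since $\rep$ is confluent one may normalise the context independently, and the purification of the whole is determined by purifying the pieces.

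The main obstacle I anticipate is the bookkeeping around the context and Lemma~\ref{SemanticsStability}: I need that $\rep$-reduction inside the freshly created matching $b\matching{\emptyset}{(a,p)}$ does not interact badly with $\rep$-reductions that were already possible elsewhere in $t$, and that the final $\trad{\cdot}$-image is independent of the order in which these are performed. This is exactly where confluence of $\rep$ (first theorem), together with Lemma~\ref{SemanticsStability} (the semantics $\sem{\mu}{\Delta}$ is invariant under $\reb$ and $\rem$), does the work: the semantics pins down $\trad{\mu'}$ up to the $\uplus$ computation regardless of the reduction path, so $t'\!\!\downarrow$ is well-defined and equals the term obtained by performing $\match{\trad{a}}{\trad{p}}$ at the level of \PPC. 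Combining with Lemma~\ref{PPCSubstitution} to handle the substitution $b^{\trad{\mu'}}$ inside the purified context yields $t\!\!\downarrow\ \reppc t'\!\!\downarrow$, completing the second alternative.
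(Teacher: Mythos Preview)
Your high-level split into $\rep$-steps (yielding $t\!\!\downarrow = t'\!\!\downarrow$) versus a $\reB$-step is the same as the paper's. But two points in your argument do not go through as written.

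First, in the $\reB$ case you invoke Lemma~\ref{MatchSimulation} to reduce $b\matching{\emptyset}{(a,p)}$ by $(\reb\cup\rem)^*$ while computing $\cmatch{\trad a}{\trad p}$. That lemma is stated for \PPC terms; your $a$ and $p$ are arbitrary well-formed \EPPC terms, possibly containing explicit matchings, so $(\reb\cup\rem)^*$ alone cannot expose the needed structure. The paper first purifies: since $t\!\!\downarrow$ is pure, so are $b\!\!\downarrow$, $a\!\!\downarrow$, $p\!\!\downarrow$, and one argues on the sequence
\[
b\!\!\downarrow\matching{\emptyset}{(a\!\!\downarrow,p\!\!\downarrow)}\ (\reb\cup\rem)^*\ b\!\!\downarrow\matching{\mu}{\Delta}\ \rer\ t'',
\]
using Lemma~\ref{SemanticsStability} (not Lemma~\ref{MatchSimulation}) to obtain $\trad\mu = \cmatch{a\!\!\downarrow}{p\!\!\downarrow}$.

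Second, and more importantly, your treatment of the ambient context $C$ is not sound. You claim one may ``normalise the context independently'' and that purification of the whole is determined by purifying the pieces; but when the hole sits inside the body of some $b_0\matching{\sigma}{\emptyset}$ with $dom(\sigma)=\theta$, the $\rer$-step substitutes the hole's content into $b_0$, possibly several times, so purification of $C[r]$ is \emph{not} a context applied to $r\!\!\downarrow$. The paper avoids this by organising the proof as an \emph{induction on the derivation of} $t\reeppc t'$. The base cases are a root-level $\reB$ (argued as above) and a root-level $\rep$-step (trivial). The congruence cases are the inductive ones; the only non-routine one is $t=b\matching{\mu}{\Delta}\reeppc b'\matching{\mu}{\Delta}=t'$ with $b\reeppc b'$: the induction hypothesis gives $b\!\!\downarrow\reppc b'\!\!\downarrow$ (or equality), and Lemma~\ref{PPCSubstitution} transports this across the substitution $\sigma$ produced when $\matching{\mu}{\Delta}$ resolves. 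That is the correct place for Lemma~\ref{PPCSubstitution}---applied to the \emph{context's} substitution, not (as in your last sentence) to the freshly created matching. Recasting your case analysis as this induction closes the gap.
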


\begin{proof}
Induction on $t\reeppc t'$.

\begin{itemize}
\item Case $t=(\case{p}{b})a\ \reB\ b\matching{\emptyset}{(p,a)}=t'$.
The term $t'\pure$ is pure, then there is a sequence
$b\pure\!\matching{\emptyset}{(p\pure,a\pure)}\ (\reb\cup\rem)^*\ b\!\pure\matching{\mu}{\Delta}\rer
t''$ where
$\trad{t''}=t'\pure$ and where $\Delta=\emptyset$ or $\mu=\fail$. By
Lemma~\ref{SemanticsStability}, $\trad{\mu}=\cmatch{a\pure}{p\pure}$. Then, by
case on matching resolution, $t\pure\reppc\trad{t''}=t'\pure$.
\item Other base cases: if $t\rep t'$, then $t\!\!\downarrow\ = t'\!\!\downarrow$\,.
\item Case $t=b\matching{\mu}{\Delta}\reeppc b'\matching{\mu}{\Delta}=t'$.
The term $t\pure$ is pure. Then
$\matching{\mu}{\Delta}\rep^*\matching{\mu'}{\Delta'}$ where $\Delta'=\emptyset$
or $\mu'=\fail$. If $\mu'=\fail$ or $dom(\mu')\neq\theta$, then
$t\pure=t'\pure=\fail$. Suppose $\Delta'=\emptyset$ and $\mu'=\sigma$ with
$dom(\sigma)=\theta$. Hence $t\pure=(b\pure)^\sigma$ and
$t'\pure=(b'\pure)^\sigma$.
By induction hypothesis $b\pure\reppc b'\pure$, and then by
Lemma~\ref{PPCSubstitution} $t\pure\reppc t'\pure$.
\item Other inductive cases are straightforward.
\end{itemize}
\end{proof}

This section introduced the new calculus \EPPC for explicit matching
with dynamic patterns, and proved its confluence. It also expressed a
bidirectional simulation between \PPC and \EPPC: first any reduction of \PPC is
reflected in \EPPC by a sequence. On the other hand, a reduction of \EPPC can be
mapped on zero or one step of \PPC if and only if its source and its target are
well-formed and can be purified. Next section discusses how this new calculus
can be used.

\section{Discussion}

\subsection{Reduction Strategies}
\label{Applications}

Pattern matching raises at least two new issues concerning reduction strategies
({\em i.e.} the evaluation order of programs). One is related to the order in which
pattern matching steps are performed, the other concerns the amount of
evaluation of the pattern and of the argument performed before pattern matching is
solved.\vspace{-2mm}

\paragraph{Some remarks about the order of pattern matching steps.}~\\
\EPPC uses a multiset as the third component of a matching
$\matching{\mu}{\Delta}$ to represent all the remaining work. The 
calculus is thus able to cover all the possible orders of pattern matching
steps. A particular strategy may be enforced by giving more structure to
the multiset $\Delta$ and by adapting the matching reduction rules.

\begin{example}\label{ExampleList}
Suppose that $\Delta$ is now a {\em list} of pairs of terms, and $(a,p)\Delta$
denotes the usual ``cons'': it builds the list whose head is $(a,p)$ and
whose tail is $\Delta$. Then the rules of Figure~\ref{EPPCRules} implement a
depth-first, left-to-right pattern matching algorithm.
\end{example}

\begin{example}
Now assume the list structure of Example~\ref{ExampleList} and replace the right
member of the reduction rule
$\matching{\mu}{(a_1\bullet a_2,p_1\bullet p_2)\Delta}\ \rem\ \matching{\mu}{(a_1,p_1)(a_2,p_2)\Delta}$
by $\matching{\mu}{\Delta(a_1,p_1)(a_2,p_2)}$. Then pattern matching is done in
a completely different order!
\end{example}

\noindent
More generally, if some permutations of the elements of $\Delta$ are allowed, lots of
richer matching behaviours may be described in \EPPC.\vspace{-2mm}

\paragraph{Pattern and argument evaluation: what is needed?}~\\
In \PPC, a naive evaluation strategy for a term $(\case{p}{b})a$ could be:
evaluate the pattern $p$ and the argument $a$, then solve the matching
(atomically). As the usual call-by-value, this solution may perform unneeded
evaluation of the argument, for instance in parts that are not reused in the
body $b$ of the function. The most basic solution to this problem, call-by-name,
allows the substitution of non-evaluated arguments. But how can such a solution
be described in a pattern calculus?

In the context of pattern matching, some evaluation of the argument has to be
done before pattern matching is solved. However the exact amount of needed
evaluation depends on the pattern. Hence pattern matching enforces some
kind of call-by-value where the notion of value is context-sensitive.
Moreover, even the evaluation of the pattern may depend on the argument!

This makes the description of a strategy performing a minimal evaluation of the
dynamic pattern and the argument rather difficult. One may keep for the
object-level a compact formalism like \PPC by defining complex meta-level
operations finely parametrised by terms. This is done in~\cite{Standard} to
describe standard reductions in a simpler pattern calculus. In contrast to this
solution, we want to show here how the richer syntax of \EPPC allows a simple
description of such a reduction strategy.

Indeed \EPPC allows to interleave pattern and argument reduction with
pattern matching steps. This finer control allows for instance an easy
definition of a ``matching-driven'' reduction, as pictured in
Figure~\ref{StrategyPicture}.
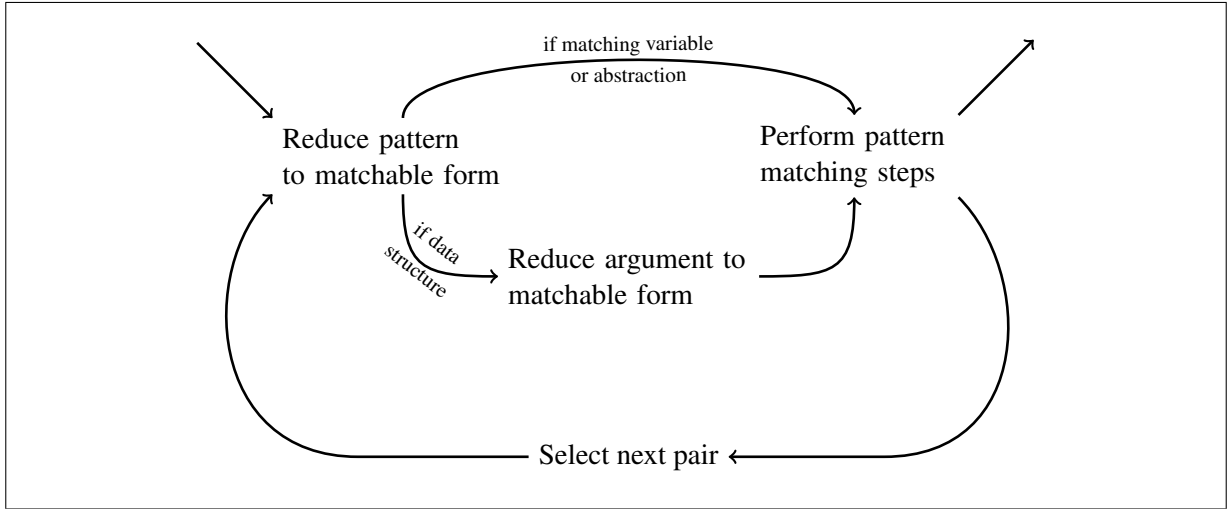
\begin{figure}
\begin{center}
\vskip2mm
\begin{tikzpicture}
\node[anchor=center,text width=3.2cm](pat){Reduce pattern\\ to matchable form};
\node[anchor=center,text width=3.2cm](arg) at (3,-1.6) {Reduce argument to matchable form};
\node[anchor=center,text width=2.5cm](mat) at (6,0) {Perform pattern matching steps};
\node[anchor=center] (nex) at (3,-4) {Select next pair};
\draw[->,line width=1pt] (pat.north) ..controls +(0,1) and
+(0,1).. (mat.north) node[sloped,pos=0.5,above=-0.8mm]{\scriptsize if matching variable} node[sloped,pos=0.5,below=-0.5mm]{\scriptsize or abstraction};
\draw[->,line width=1pt] (pat.south) ..controls +(0,-1) and
+(-1,0).. (arg.west) node[anchor=center,sloped,pos=0.5,above=1mm]{\scriptsize if data}
node[anchor=center,sloped,pos=0.5,below=-1mm]{\scriptsize structure};
\draw[->,line width=1pt] (arg.east) ..controls +(1,0) and +(0,-1).. (mat.south);
\draw[->,line width=1pt] (mat.south east) ..controls +(1,-1) and +(2,0).. (6.4,-4)--(nex.east);
\draw[<-,line width=1pt] (pat.south west) ..controls +(-1,-1) and +(-2,0).. (-0.6,-4)--(nex.west);
\draw[<-,line width=1pt] (pat.north west)--+(-1,1);
\draw[->,line width=1pt] (mat.north east)--+(1,1);
\end{tikzpicture}
\end{center}

\caption{Matching-driven reduction strategy}
\label{StrategyPicture}
\end{figure}

The idea here is to trigger pattern matchings as soon
as possible. Then the pattern and the argument are evaluated until they become
matchable, and one or more pattern matching steps are performed before the story
goes on. A formal definition of a strategy implementing this picture is
by restricting the reduction under a context to the only four rules given in
Figure~\ref{Contexts}.

\begin{figure}
\begin{center}
\AxiomC{$t_1\ \re\ t'_1$}
\UnaryInfC{$t_1t_2\ \re\ t'_1t_2$}
\DisplayProof\bigskip\\
\AxiomC{$p\ \re\ p'$}
\UnaryInfC{$b\matching{\mu}{(a,p)\Delta}\ \re\ b\matching{\mu}{(a,p')\Delta}$}
\DisplayProof\bigskip\\
\AxiomC{$a\ \re\ a'$}
\RightLabel{$x\not\in\theta$}
\UnaryInfC{$b\matching{\mu}{(a,\hat{x})\Delta}\ \re\ b\matching{\mu}{(a',\hat{x})\Delta}$}
\DisplayProof\bigskip\\
\AxiomC{$a\ \re\ a'$}
\UnaryInfC{$b\matching{\mu}{(a,p_1\bullet p_2)\Delta}\ \re\ b\matching{\mu}{(a',p_1\bullet p_2)\Delta}$}
\DisplayProof
\end{center}

\caption{Context rules for matching-driven reduction}
\label{Contexts}
\end{figure}

\begin{figure}[p]
\begin{center}\vskip2mm
\begin{tabular}{r@{$\quad\rer\quad$}ll}
  $b\matching{\tau}{(a,\hat{x})\Delta}$ & $b^{\set{x\mapsto
      a}}\matching{\tau\cup\set{x}}{\Delta}$ & if $x\in\theta$, $x\not\in\tau$
  and $\fn{a}\cap\theta=\emptyset$\vspace{2mm}\\
$b\matching[\theta]{\theta}{\emptyset}$ & $b^\sigma$\\
$b\matching[\theta]{\tau}{\emptyset}$ & \fail & if $\tau\neq\theta$\\
$b\matching[\theta]{\fail}{\Delta}$ & \fail
\end{tabular}
\end{center}

\caption{Partial substitution rules}
\label{PartialSubstitutionRules}
\end{figure}

Moreover, it can be checked that the list structure of Example~\ref{ExampleList}
associated with the rules of Figure~\ref{EPPCRules} and the context rules of
Figure~\ref{Contexts} gives a deterministic reduction strategy for
\EPPC (which means that any term has at most one authorised redex).

\subsection{An Extension: Partial Substitution}
\label{EPMPS}

Relaxing the matching procedure generates new possibilities of evaluation,
which may bring more partial evaluation, more sharing or more parallelism. We
explore here an extension of \EPPC where the partial result of a matching can be
applied to the function body before the matching process is completed.

\begin{example}\label{PartialSubstitutionExample}
Consider the following reduction:\medskip\\
\begin{tabular}{ll}
\multicolumn{2}{l}{$(\case[x]{\hat{x}z}{((\case[\emptyset]{x}{b})\hat{c})})\,(\hat{c}t)$}\smallskip\\
$\reB$ & $(\case[x]{\hat{x}z}{(b\matching[\emptyset]{\emptyset}{(\hat{c},x)})})\,(\hat{c}t)$
\end{tabular}\medskip\\
The matching $\matching[\emptyset]{\emptyset}{(\hat{c},x)}$ is blocked because
of the presence of the variable $x$ in the pattern. Still, the external
application can be evaluated:\medskip\\
\begin{tabular}{ll}
$\reB$ & $(b\matching[\emptyset]{\emptyset}{(\hat{c},x)})\,\matching[x]{\emptyset}{(\hat{c}t,\hat{x}z)}$\\
$\reb^2$ & $(b\matching[\emptyset]{\emptyset}{(\hat{c},x)})\,\matching[x]{\emptyset}{(\hat{c}\bullet
    t,\hat{x}\bullet z)}$\\
$\rem$ & $(b\matching[\emptyset]{\emptyset}{(\hat{c},x)})\,\matching[x]{\emptyset}{(\hat{c},\hat{x})(t,z)}$\\
$\rem$ & $(b\matching[\emptyset]{\emptyset}{(\hat{c},x)})\,\matching[x]{\set{x\mapsto\hat{c}}}{(t,z)}$
\end{tabular}\medskip\\
Now, the external matching $\matching[x]{\set{x\mapsto\hat{c}}}{(t,z)}$ is also
blocked because of the variable $z$. However, its partial result is a
substitution for $x$ which, if applied, may unlock the internal
matching. Indeed, allowing this partial substitution could lead to a reduction
like:\medskip\\
\begin{tabular}{ll}
$\re$ & $(b\matching[\emptyset]{\emptyset}{(\hat{c},\hat{c})})\,\matching[x]{\set{x\mapsto\hat{c}}}{(t,z)}$\\
$\rem$ & $(b\matching[\emptyset]{\emptyset}{\emptyset})\,\matching[x]{\set{x\mapsto\hat{c}}}{(t,z)}$\\
$\rer$ & $b\matching[x]{\set{x\mapsto\hat{c}}}{(t,z)}$
\end{tabular}\medskip\\
where the internal matching is finally solved!
\end{example}

\noindent
This kind of power may be of interest in two situations:
\begin{itemize}
\item By allowing more reduction in open terms, we
  gain more partial evaluation capabilities. This may be interesting for
  greater sharing and efficient evaluation~\cite{Fuller}.
\item Suppose now that $z$ is replaced in the example by a possibly big term. In
  a parallel implementation we could complete the external matching and 
  evaluate the internal one in parallel. As pointed out in~\cite{MatchingINets},
  this might represent another gain in efficiency.
\end{itemize}

A light variation on \EPPC gives this new power to our formalism. The
principle of this variant is to systematically apply partial results
(substitutions) as soon as they are obtained. Hence they do not need to be
remembered in the object representing ongoing matching operations. Only a list
of used variables is remembered for linearity verification.

The object representing a matching is now $\matching{\tau}{\Delta}$ where $\tau$
is either \fail or the list of the names of the matching variables that have
already been used. Now the test of disjoint union of substitutions is replaced
by a simple test against $\tau$, while the final check compares $\theta$ and
$\tau$.

Initialisation, structural application, and most matching rules are the same in
this variant. The only differences are for the first matching rule and the
resolution rules, which are now as in Figure~\ref{PartialSubstitutionRules}.

Any \EPPC term can be translated into a term of this new calculus by applying
the following transformation: turn any $b\matching{\sigma}{\Delta}$ into
$b^\sigma\matching{dom(\sigma)}{\Delta}$ (there is nothing to change in a failed
matching).

The simulation between \EPPC and this extension is only one way: any reduction
of \EPPC is mapped by the previous morphism to a reduction sequence, but the
converse is not true. Indeed the calculus with partial substitution allows new
reductions, as pictured in Example~\ref{PartialSubstitutionExample}. Confluence
for this variant seems to be provable using the same technique as for plain
\EPPC.

\section*{Conclusion}

The {\em Pure Pattern Calculus} is a compact framework modelling pattern
matching with dynamic patterns. However, the conciseness of \PPC is due to the
use of several meta-level notions which deepens the gap between the calculus and
implementation-related problems. This contribution defines the {\em Pure Pattern
  Calculus with Explicit Matching}, a refinement which is confluent and
simulates \PPC, and allows reasoning on the pattern matching mechanisms.

This enables a very simple definition of new reduction strategies in the spirit
of call-by-name, which is new in this kind of framework since the reduction of
the argument of a function depends on the pattern of the function, pattern which
is itself a dynamic object.
In the same direction, it would be interesting to express standardisation in
pattern calculi (as presented for example in~\cite{Standard}) using explicit matching.

\bibliographystyle{eptcs}

\end{document}